\documentclass[12pt]{article}

\newcommand{\be}{\begin{equation}}
\newcommand{\ee}{\end{equation}}

\voffset=-11mm
\hoffset=-11 mm
\textheight=210mm 
\textwidth=170mm

\usepackage{amsmath,amsfonts,amsthm,amssymb,eucal}
\usepackage{breqn}

\newtheorem{Definition}{Definition}
\newtheorem{Theorem}{Theorem}

\newtheorem{Example}{Example}
\begin{document}

\begin{center}

Mathematics. 2021. Vol.9. No.13. Article number: 1464.

\vskip 7mm

{\bf \Large General Fractional Dynamics}

\vskip 7mm
{\bf \large Vasily E. Tarasov}$^{1,2}$ \\
\vskip 3mm

${}^1$ {\it Skobeltsyn Institute of Nuclear Physics, \\ 
Lomonosov Moscow State University, Moscow 119991, Russia} \\
{E-mail: tarasov@theory.sinp.msu.ru} \\

${}^2$ {\it Faculty "Information Technologies and Applied Mathematics", \\ 
Moscow Aviation Institute (National Research University), Moscow 125993, Russia } \\

\begin{abstract}
General fractional dynamics (GFDynamics) can be viewed as an interdisciplinary science, in which the non-local properties of linear and nonlinear dynamical systems are studied by using of general fractional calculus, equations with general fractional integrals (GFI) and derivatives (GFD), or general nonlocal mappings with discrete time. 
The GFDynamics implies research and obtaining results concerning of general form of nonlocality, which can be described by 
general form operator kernels, and not its particular implementations and representations. 

In this paper, it is proposed the concept of "general nonlocal maps" that are exact solutions of equations with GFI and GFD at discrete points. 
In these maps, the non-locality is determined by the kernels that are associated to the Sonin and Luchko kernels of general fractional integrals and derivatives, which are used in initial equations. 

Using general fractional calculus, we consider fractional systems with general non-locality in time, which are described by equations with general fractional operators and periodic kicks. 
Equations with GFI and GFD of arbitrary order are also used to derive general nonlocal maps.
Exact solutions for these general fractional differential and integral equations with kicks are obtained.
These exact solutions with discrete time points are used to derive general nonlocal maps without approximations. 
Some examples of non-locality in time are described.
\end{abstract}

\end{center}

\noindent
MSC: 26A33; 34A08; 70G60; 70Kxx; 45E10
PACS: 03.65.Ta; \\



\section{Introduction}


Fractional dynamics \cite{Zaslavsky2004,Springer2010,KlafterLimMetzler2012,EdelmanMacauSanjuan2018} is an interdisciplinary science, in which the nonlocal properties of dynamical systems are studied by using methods of fractional calculus
\cite{FC1,FC2,FC3,FC4,FC5,GP2015,Handbook2019-1,Handbook2019-2}, 
integro-differential equations of non-integer orders and discrete nonlocal mappings. 
A fractional dynamical system is understood as a nonlocal system of any nature (physical, biological, economic, etc.), the state of which changes (discretely or continuously) in time. 
Fractional dynamics uses fractional differential equations and fractional discrete mappings to describe dynamical systems with non-locality in space and time 
in physics \cite{Handbook2019-4,Handbook2019-5}, 
biology \cite{Ionescu}, 
economics \cite{BOOK-MDPI-2020,BOOK-DG-2021} for example.

Processes with non-locality in time are characterized by the dependence of behavior of dynamical systems at a given moment in time on the history of its behavior on a finite time interval in the past. 
To describe this dependence can be used the integral and integro-differential operators that form a fractional calculus.
Fractional calculus is a branch of mathematical analysis that studies two types of integro-differential operators, for which generalized analogs of fundamental theorems hold, and therefore these types of operators are called fractional derivatives and fractional integrals. 
The characteristic property of fractional derivatives and integrals is nonlocality in space in time \cite{PFC5,Rules}.

In fractional calculus, nonlocality is described by the kernel of the operators, which are fractional integrals (FI) and fractional derivatives (FD) of non-integer orders.
To describe dynamical systems with various types of nonlocality in space and time, it is important to use 
integral and integro-differential operators
with various types of kernels that allow us to describe various types of nonlocality. 
Therefore it is important to have a general fractional calculus that allows us to describe nonlocality in the most general form. 

The concept of general fractional calculus (GFC) has been suggested by Anatoly N. Kochubei in his work \cite{Kochubei2011} in 2011 (see also \cite{Kochubei2019-1} and \cite{Kochubei2019-2,Kochubei2019-3}). 
In works in \cite{Kochubei2011,Kochubei2019-1} the general fractional derivatives (GFD) and general fractional integral (GFI) are defined. 
For these operators the general fundamental theorems are proved 
in \cite{Kochubei2011,Kochubei2019-1}. 
This approach to GFC is based on the concept of kernel pairs, which was proposed by the Russian mathematician 
Nikolay Ya. Sonin (1849-1915) in 1884 article \cite{Sonine-1} (see also \cite{Sonine-2}).
Note that in the mathematical literature, his name 
"Sonin" \cite{SoninMathnet} is mistakenly used in French transliteration as "N. Sonine" from french journals \cite{Sonine-1}.
Then, very important results in constructing the GFC was derived by Yuri Luchko in 2021 \cite{Luchko2021-1,Luchko2021-2,Luchko2021-3}. In works \cite{Luchko2021-1,Luchko2021-2},
GFD and GFI of arbitrary order are suggested, and the general fundamental theorems for the GFI and GFDs are proved.
Operational calculus for equations with general fractional derivatives is proposed in \cite{Luchko2021-3}. 


As a result, a mathematical basis was created for constructing the general fractional dynamics. 
In the framework of the general fractional dynamics, it is assumed and implied to obtain, not only and not so much, results concerning studies of specific types of operator kernels, but, first of all, general results that do not depend on specific types (particular implementations) of operator kernels.
In general fractional dynamics, all research and results should concern the general form of nonlocality, operator kernels of almost all types (all set of kernels), or a wide subset of such kernels. 

It should be noted that by now solutions of some equations with general operators have already been described, which form the basis of general fractional dynamics and can be used to describe some dynamic processes in different fields of science. 
Let us note some ordinary and particular fractional differential equations with GFD and GFI, which have already been considered by now and can be used in GFDynamics. 
A solution of the relaxation equations with GFD with respect to the time is described in \cite{Kochubei2011,Kochubei2019-2}.
The general growth equation with GFD, which can be used in macroeconomic models for processes with memory and distributed lag, is described in \cite{Kochubei2019-3}. 
The equations with GFD are considered in \cite{Kochubei2017,Sin2018}. 
The time-fractional diffusion equations with the GFD were described in \cite{Kochubei2011,Kochubei2019-2,LuchkoYamamoto2016,LuchkoYamamoto2020}. 
To solve general fractional differential equation, a general operational calculus was proposed in \cite{Luchko2021-3}.
The integral equations of the first kind with Sonin kernels, and the GFI and GFD of the Liouville and Marchaud type are described in \cite{Samko1,Samko2}. 
The partial differential equations containing the GFD and GFI are considered in 
\cite{KinashJanno2019-1,KinashJanno2019-2}. 
Some applications of GFC are described in recent published works (see \cite{LuchkoYamamoto2016,LuchkoYamamoto2020,Sin2018,Hanyga,Giusti} and references therein). 


In nonlinear dynamics, discrete-time description is usually derived from differential equations with integer-order derivatives and periodic kicks (Section 5 in \cite{SUZ},
Sections 5.2, 5.3 in \cite[pp.60-68]{Zaslavsky2}, and Section 1.2 in \cite[pp.16-17]{Schuster}, Chapter 18 in \cite[pp.409-453]{Springer2010}). 
This description is represented by mappings, in which the value $X_{n+1}$ is determined by the value $X_{n}$ 
(or a fixed finet number of values $X_n$ and $X_{n-1}$, for example), i.e. 
$X_{n+1}={\cal F}(X_n)$ (or $X_{n+1}={\cal F}(X_n,X_{n-1})$, for example). 
These maps cannot describe nonlocal dynamical systems, since
since the original equations contain only derivatives of integer order, and thus are local. 

We should note that discrete general fractional calculus 
has not yet been created. 
However, it should be noted that discrete GFDynamics can be described by using the proposed concept of general nonlocal maps.
Such discrete nonlocal mappings must be described by kernel of operators that are used in general fractional calculus. 
These maps can be derived from equations with GFD and GFI without approximations. 
In fact, these general nonlocal maps are exact solutions of fractional differential equations at discrete points. 
The concept of "general nonlocal maps" is based on the approach that was proposed for discrete fractional dynamics in
\cite{Tarasov-Zaslavsky,Tarasov-Map1,Tarasov-Map2,Springer2010}. 

In the discrete fractional dynamics, the nonlocality in time is taken into account by the maps, in which the value
$X_{n+1}$ is determined by all past values, $X_{n+1}={\cal F}(X_n,X_{n-1}, \dots ,X_1)$, where
the number of variables in the function ${\cal F}$ increases with number $n\in \mathbb{N}$. 
For the first time such a discrete mappings was derived from fractional differential equations in \cite{Tarasov-Zaslavsky}. 
Then, this approach was developed in \cite{Tarasov-Map1,Tarasov-Map2,Springer2010}, 
and it has been has been applied in 
\cite{Tarasov-Map3,TT-Logistic,Entropy2021,MMAS2021,BOOK-DG-2021,Mathematics2021}. 
We should emphasize that the nonlocal mappings are derived from fractional differential equations without approximations.
The first computer simulations of these nonlocal mappings are proposed in \cite{TT-Edelman1,TT-Edelman2}. 
Then new types of chaotic behavior of systems with nonlocality in time were discovered 
\cite{Edelman1,Edelman2,Edelman3,EdelmanTaieb,Edelman4,Edelman5},
\cite{Edelman6,Edelman7,Edelman2015,Edelman8,Edelman2018,Edelman2018-2}
\cite{Edelman-Handbook-2,Edelman-Handbook-4,Edelman2021}.
The discrete fractional calculus \cite{GP2015} are also used to derive nonlocal mappings in 
\cite{Edelman1} -- \cite{Edelman2021}.
All these mappings are described discrete fractional dynamics with power-law nonlocality in time only. 

Therefore it is important to derive general fractional dynamics, where we take into account general form of nonlocality in time.
The general nonlocal maps should be derived from equations with GFD and GFI.
These nonlocal maps can demonstrate new types of attractors and chaotic behavior of nonlocal systems.
We should emphasize that these discrete GFDynamics can be derived from equations with GFD and GFI without approximations.

In this paper, using general fractional calculus, we consider fractional systems with non-locality in time, which are described by equations with general fractional derivatives, integrals and periodic kicks. 
Exact solutions for these nonlinear fractional differential and integral equations with kicks are obtained.
These exact solutions for discrete time points are used to derive maps with non-locality in time are described without approximations. 
The non-locality of geberal nonlocal maps is determined by kernels that are associated Sonin and Luchko kernels of GFI and GFD that are used in initisl equations.


\section{Equations of General Fractional Dynamical Systems}

Fractional dynamics with continuous time is described by the integral and integro-differential equations. 
To take into account non-locality in time, we should require that these equations cannot be represented as differential equations or systems of differential equations of integer orders only.

To described dynamical systems with general form of non-locality in time, it is possible to use the following equations
\be \label{EQ-I}
I^{t}_{(M)} [\tau] \, X(\tau) = F_I(t,X(t)) ,
\ee
\be \label{EQ-D}
D^{t}_{(K)} [\tau] \, X(\tau) = F_D(t,X(t)) ,
\ee
\be \label{EQ-Dc}
D^{t,*}_{(K)} [\tau] \, X(\tau) = F_D(t,X(t)) ,
\ee
where the operators $I^{t}_{(M)}$ and $D^{t}_{(K)}$ have the forms
\be \label{Oper-IM}
I^{t}_{(M)} [\tau] \, X(\tau) = (M \, * \, X)(t)=
\int^t_{0} \, d\tau \, M(t -\tau) \, X (\tau) 
\ee
\be \label{Oper-DK}
D^{t}_{(K)} [\tau] \, X(\tau) = 
\frac{d}{dt} (K \, * \, X)(t)=
\frac{d}{dt} \int^t_{0} \, d\tau \, K(t - \tau) \, X (\tau) 
\ee
\be \label{Oper-DKc}
D^{t,*}_{(K)} [\tau] \, X(\tau) = (K \, * \, X^{(1)})(t)=
\int^t_{0} \, d\tau \, K(t -\tau) \, X^{(1)} (\tau) 
\ee
In this case, the non-locality in time is characterized by the kernels $M(t,\tau)$, $K(t -\tau)$. The operators
\eqref{Oper-IM}, \eqref{Oper-DK} and \eqref{Oper-DKc} are defined through the Laplace convolution $*$ with these kernels.

\begin{Example} 
For the integral kernel
\be \label{M-PL}
M(t-\tau) = \frac{(t-\tau)^{\alpha-1}}{\Gamma(\alpha)}, 
\ee
equation \eqref{EQ-I} gives the equations 
\be \label{EQ-M-PL}
I^{\alpha}_{RL} [\tau] X(\tau) = F_I(t,X(t)) ,
\ee
where $I^{\alpha}_{RL}$ is the Riemann-Liouville integral of the order $\alpha \in (0,\infty)$, \cite{FC4}.
\end{Example}

\begin{Example} 
For the kernel
\be \label{K-PL}
K(t-\tau) = \frac{(t-\tau)^{-\alpha}}{\Gamma(1-\alpha)}, 
\ee
equations \eqref{EQ-D} and \eqref{EQ-Dc} give the equations 
\be \label{EQ-K-PL}
D^{\alpha}_{RL} [\tau] X(\tau) = F_D(t,X(t)) ,
\quad
D^{\alpha}_{C} [\tau] X(\tau) = F_D(t,X(t)) ,
\ee
where $D^{\alpha}_{RL}$ and $D^{\alpha}_{C}$ are the Riemann-Liouville and Caputo fractional derivatives of the order $\alpha (0,1)$, respectively, \cite[pp.70,92]{FC4}.
\end{Example} 

Equations \eqref{EQ-M-PL} and \eqref{EQ-K-PL} 
describe dynamical systems with power-law nonlocality in time, which is interpreted as a fading memory and kernels are called the memory function.

Kernels \eqref{M-PL} and \eqref{K-PL} can be considered as an approximation of a more general form of kernels \cite{General}
to describe dynamical systems with nonlocality in time. 
In article \cite{General}, using the generalization of the Taylor series that is proposed in \cite{TRB} for the kernels, 
we proved that for the wide class of kernels can be approximately considered as a power function.

For applications, it is important to describe dynamical systems with a more general form of nonlocality in time. 
Therefore, it is necessary to consider the general type of kernels for integral and integro-differential operators 
\eqref{Oper-IM}, \eqref{Oper-DK} and \eqref{Oper-DKc}.

To do this, we must have a general integral operators and integro-differential operator, which can be interpretedas a generalization of the standardintegrals and derivative of integer order. Moreover, these general fractional operators must form a certain calculus, and the fundamental theorems of this calculus must hold for these operators. 

The general fractional calculus (GFC) is a branch of mathematical analysis of the integral and integro-differential operators that are generalizations of integrals and derivatives of integer orders, for which the generalization of the fundamental theorems of calculus are satisfied.

The GFC is based on the concept of a pair of mutually associated Sonin's kernels (Sonine kernels). 
This type of kernels was proposed by the Russian mathematician 
Nikolay Ya. Sonin (1849-1915) in 1884 \cite{Sonine-1,Sonine-2}, publishing articles in French as "N. Sonine".

The concept of mutual associated kernels of two operators, which are defined through the Laplace convolution, assumes that the Laplace convolution for the kernels of these operators is equal to one. For the operators \eqref{Oper-IM}, \eqref{Oper-DK} and \eqref{Oper-DKc}, the Sonin's condition for the kenels
$M(t)$ and $K(t)$ requires that the relations 
\be \label{Sonin-Cond}
\int^t_0 M(t-\tau) \, K(\tau) \, d\tau = 1 
\ee
holds for all $t \in (0,\infty)$.

In this work about GFDynamics, we will follow the Luchko's approach to general fractional calculus that is proposed by Yuri Luchko in articles \cite{ Luchko2021-1, Luchko2021-2} in 2021.

\begin{Definition} 
The functions $M(t), K(t) \in C_{-1,0}(0,\infty)$ are a Sonin's pair of kernels (or the Sonin kernels), if
\be
M(t), K(t) \in C_{-1,0}(0,\infty) 
\ee 
and the Sonin condition \eqref{Sonin-Cond} holds, where
\begin{equation} \label{Cabe}
C_{a,b}(0,\infty) \, = \, 
\{ X(t): \ X(t) = t^{p} \, Y(t),\ t>0,\ 
a < p < b, \ Y(t) \in C[0,\infty) \} .
\end{equation}
The set of such Sonin's kernels is denoted by $\mathcal{S}_{-1}$, and $\left( M(t), \, K(t)\right) \in \mathcal{S}_{-1}$, if
\begin{equation} \label{Son-1}
M(t), \, K(t) \in C_{-1,0}(0,\infty) \quad \text{and} \quad
(M \, * \, K)(t) = 1 
\end{equation}
for all $t \in t \in (0,\infty)$. 
\end{Definition}

\begin{Definition} \label{DEFIN-GFI}
Let $M(t) \in \mathcal{S}_{-1}$ and 
$X(t) \in C_{-1}(0,\infty) = C_{-1,\infty}(0,\infty)$. 
The general fractional integral with the kernel 
$M(t) \in C_{-1,0}(0,\infty)$ is the operator on the space $C_{-1}(0,\infty)$, that is 
\begin{equation} 
I^{t}_{(M)}: \ C_{-1}(0,\infty)\, \rightarrow C_{-1}(0,\infty),
\end{equation}
that is defined by the equation
\begin{equation} \label{DEF-GFI}
I^{t}_{(M)} [\tau] X(\tau) = 
(M \, * \, X)(t) = 
\int^t_0 d \tau \, M(t-\tau) \, X(\tau) .
\end{equation}
\end{Definition}

The following important property allows to define the repeated action general operator \eqref{DEF-GFI}.
Let $M_1(t), M_2(t)\in \mathcal{S}_{-1}$ and 
$X(t) \in C_{-1}(0,\infty)$.
Then the equality 
\begin{equation} 
I^{t}_{(M_1)} [\tau]\, I^{\tau}_{(M_2)}[s] X(s) = 
I^{t}_{(M_1*M_2)}[\tau] X(\tau) .
\end{equation}
holds for $t>0$.

Let us assume that the kernels $M(t)$ and $K(t)$ form are a Sonin's pair of kernels. This allows us to defined general fractional derivatives $D^{t}_{(K)}$ and $D^{t,*}_{(K)}$ 
that are associated with general fractional integral $I^{t}_{(M)}$.

The space $C^m_{-1}(0,\infty) \subset C_{-1}(0,\infty)$, where $m \in \mathbb{N}$, consists of functions $X(t)$, for which 
$X^{(m)} \in C_{-1}(0,\infty)$.

\begin{Definition} \label{DEFIN-GFD}
Let $K(t) \in \mathcal{S}_{-1}$ and 
$X(t) \in C^1_{-1}(0,\infty)$.
The general fractional derivative of the Riemann-Liouville type
with kernel $K(t) \in C_{-1,0}(0,\infty)$, which is associated with GFI \eqref{DEF-GFI} is defined as 
\begin{equation} \label{DEF-GFD-RL}
D^{t}_{(K)} [\tau] X(\tau) = 
\frac{d}{dt} (K \, * \, X )(t) =
\frac{d}{dt} \int^{t}_{0} d\tau \, K(t-\tau) \, X(\tau) .
\end{equation}
for $t \in (0,\infty)$.

The general fractional derivative of the Caputo type 
with kernel $K(t) \in C_{-1,0}(0,\infty)$ is defined as 
\begin{equation} \label{DEF-GFD-C}
D^{t,*}_{(K)} [\tau] X(\tau) = 
(K \, * \, X^{(1)})(t) =
\int^{t}_{0} d\tau \, K(t-\tau) \, X^{(1)}(\tau) 
\end{equation}
for $t \in (0,\infty)$.
\end{Definition}

These GFI and GFD satisfy the fundamental teorems of GFC \cite{Luchko2021-1,Luchko2021-2}.


\section{To General Fractional Dynamics}

General fractional dynamics is an interdisciplinary science, in which the non-local properties of nonlinear dynamical systems are studied. General fractional dynamics uses linear and nonlinear models to describe systems with general forms of non-locality in space and time by using equations with operators of general fractional calculus. 

We can conditionally distinguish the following three directions in the general fractional dynamics with non-locality in time. 

\begin{itemize}

\item
1) General fractional dynamics with continuous time is described by the equations with GFD and GFI with the Sonin kernels.
For example, solutions of the integral equations with Sonin kernels 
\be
\int^t_{-\infty} d\tau \, M(t-\tau) \, X(\tau) = F(t)
\ee
in which the lower limit is not zero but minus infinity, were proposed Stefan G. Samko and Rogerio P. Cardoso \cite{Samko1,Samko2}, where general fractional integrals and derivatives are considered in the 
Marchaud-type and Liouville type.

For example, solution of the relaxation equations 
\be \label{EQ-Dc-Lin}
D^{t,*}_{(K)} [\tau] \, X(\tau) = \lambda X(t) ,
\ee
where relaxation means that $\lambda<0$, 
was derived by Anatoly N. Kochubei in works \cite{Kochubei2011,Kochubei2019-2}. 
Solution of the growth equation that is described by \eqref{EQ-Dc-Lin} with $\lambda>0$ is derived by Anatoly N. Kochubei and Yuri Kondratiev \cite{Kochubei2019-3} in 2019. 
Operational calculus for equations with general fractional derivatives with the Sonin kernels is proposed in
\cite{Luchko2021-3}. 

\item
2) General fractional dynamics with discrete time could be described by a discrete analogue of general fractional calculus. However, such a "discrete general fractional calculus" has not yet been created.

\item
3) Another way of describing general fractional dynamics which discrete time can be based on the use of discrete mappings obtained from exact solutions of general fractional differential and integral equations with periodic kicks. 
For the first time discrete mappings with non-locality in time were derived from fractional differential equations in papers 
\cite{Tarasov-Zaslavsky,Tarasov-Map1,Tarasov-Map2}, in which the Riemann-Liouville and Caputo fractional derivatives were used. 
(see also \cite{Springer2010} and \cite{Tarasov-Map3,TT-Logistic,Entropy2021,MMAS2021}). 
The proposed approach approach allows us to derive discrete-time mappings with non-locality in time from integro-differential equations of non-integer orders without approximation. 
This approach is a generalization of methods that is well-known in nonlinear dynamics and theory of chaos
(Sec. 5 in \cite{SUZ},
Sec. 5.2, 5.3 in \cite[pp.60-68]{Zaslavsky2}, and 
Sec. 1.2 in \cite[pp.16-17]{Schuster}, 
Ch. 18 in \cite[pp.409-453]{Springer2010}), 
where the discrete-time dynamical mapping are derived from ordinary differential equations of integer orders with kicks. 

\end{itemize}

Let us consider equations \eqref{EQ-I}, \eqref{EQ-D}, \eqref{EQ-Dc} with GFD and GFI, and periodic kicks, by using
\be \label{F-N-delta}
F_I(t,X(t)) = F_D (t,X(t)) = 
\lambda \, {\cal G}(t,X(t)) 
\sum^{\infty}_{k=1} \delta ((t+\epsilon)/T-k) ,
\ee
where $T$ is period of the periodic sequence of kicks, 
$\lambda$ is an amplitude of these kicks, 
${\cal G}(t,X)$ is a real-valued function.

Expression \eqref{F-N-delta} contains the delta functions that are distributions (the generalized functions). 
Therefore this expression and equations \eqref{EQ-I}, \eqref{EQ-D}, \eqref{EQ-Dc} with function \eqref{F-N-delta}
are treated as continuous functionals on a space of test functions (see Sec 8 \cite[pp.145-160]{FC1} and \cite{GenFunc1,GenFunc2}).

We also use $t-\epsilon$, where $0<\epsilon \ll T$, instead of $t$ in the argument of the delta functions, to make a sense of the product of ${\cal G}(t,X(t))$
and the delta function for the time points, where $X(t-0) \neq X(t+0)$, \cite{Edelman6}. 

We will also assume that the function ${\cal G}(t,X(t))$ is such that the product of the delta function is defined in the neighborhood of the points $t=Tk-\epsilon$, where $k \in \mathbb{N}$
and $\epsilon \to 0$. 

We start first by considering the integro-differential equation \eqref{EQ-Dc} with the GFD $D^{t,*}_{(K)}$ with the Sonin kernal $K(t) \in C_{-1,0}(0,\infty)$ and periodic kicks \eqref{F-N-delta}.

\begin{Theorem} \label{Theorem-1}
Let $K(t) \in C_{-1,0}(0,\infty)$ and 
$X(t) \in C^1_{-1}(0,\infty)$.
Then the integro-differential equation
\be \label{EQ-Th1-1}
D^{t,*}_{(K)} [\tau] \, X(\tau) = \lambda \, {\cal G}(t,X(t)) 
\sum^{\infty}_{k=1} \delta ((t+\epsilon)/T-k) ,
\ee
has the solution 
\be \label{EQ-Th1-2}
X(t) = X(0) \, + \, \lambda \, T \, \sum^{n}_{k=1} 
M(t-(Tk-\epsilon)) \, {\cal G}(Tk-\epsilon,X(Tk-\epsilon)) ,
\ee
if $Tn<t<T(n+1)$, where $M(t) \in C_{-1,0}(0,\infty)$ is a function that is associated kernel to the kernel $K(t)$, i.e. 
the functions $K(t)$, $M(t)$ form a mutually associated pair of Sonin's kernels.
\end{Theorem}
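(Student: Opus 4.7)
The plan is to apply the associated general fractional integral $I^{t}_{(M)}$ to both sides of equation (\ref{EQ-Th1-1}), exploiting the fact that $(M,K)$ is a Sonin pair. By the second fundamental theorem of GFC for the Caputo-type GFD (proved in \cite{Luchko2021-1,Luchko2021-2}), the operator $I^{t}_{(M)}$ inverts $D^{t,*}_{(K)}$ on $C^{1}_{-1}(0,\infty)$ up to an initial-value term:
\begin{equation*}
I^{t}_{(M)}[\tau]\, D^{\tau,*}_{(K)}[s]\, X(s) \;=\; X(t) - X(0).
\end{equation*}
This disposes of the left-hand side once we know the right-hand side admits the action of $I^{t}_{(M)}$ in a distributional sense.

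For the right-hand side, I would use the rescaling property of the Dirac delta,
\begin{equation*}
\delta\!\left((\tau+\epsilon)/T - k\right) \;=\; T\,\delta\!\left(\tau - (Tk-\epsilon)\right),
\end{equation*}
so that after applying $I^{t}_{(M)}$ the convolution structure reduces to a termwise sifting:
\begin{equation*}
\lambda \sum_{k=1}^{\infty} \int_{0}^{t}\! d\tau\, M(t-\tau)\, \mathcal{G}(\tau,X(\tau))\, \delta\!\left((\tau+\epsilon)/T - k\right) \;=\; \lambda T \sum_{k=1}^{\infty} M(t-(Tk-\epsilon))\, \mathcal{G}(Tk-\epsilon, X(Tk-\epsilon)).
\end{equation*}
The sum is truncated by the support condition $0 < Tk-\epsilon < t$: for $Tn < t < T(n+1)$ and $0 < \epsilon \ll T$ only $k = 1,\dots,n$ contribute, yielding exactly formula (\ref{EQ-Th1-2}).

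The main obstacle is the distributional justification. The product $\mathcal{G}(t,X(t))\,\delta(\cdot)$ must be interpretable even though $X$ jumps at the kick instants, and this is precisely the role of the $\epsilon$-shift: it fixes the evaluation at $X(Tk-0)$ rather than at the discontinuity itself. One must also verify that $M(t-\cdot) \in C_{-1,0}(0,\infty)$ is an admissible test function against the delta train on $(0,t)$ and that the interchange of the infinite sum with the convolution integral is legitimate; this is automatic because for each fixed $t \in (Tn, T(n+1))$ only finitely many terms have support inside $(0,t)$. A final self-consistency check is that the resulting $X(t)$ in (\ref{EQ-Th1-2}) lies in $C^{1}_{-1}(0,\infty)$ on each open interval $(Tn, T(n+1))$, so that re-applying $D^{t,*}_{(K)}$ recovers (\ref{EQ-Th1-1}).
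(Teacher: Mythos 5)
Your proposal is correct and follows essentially the same route as the paper's proof: apply the associated GFI $I^{t}_{(M)}$, invoke the second fundamental theorem of GFC to reduce the left-hand side to $X(t)-X(0)$, and then use the rescaling and sifting properties of the delta train (truncated to $k=1,\dots,n$ for $Tn<t<T(n+1)$) to obtain the stated sum. Your additional remarks on the distributional justification and the role of the $\epsilon$-shift make explicit points the paper only treats informally, but they do not change the argument.
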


\begin{proof} 
Let us use the general fractional integral
\begin{equation} \label{Proof-Th1-1}
I^{s}_{(M)} [\tau] X(\tau) = (M \, * \, X)(t) = 
\int^t_0 d \tau \, M(t-\tau) \, X(\tau) .
\end{equation}
with the kernel $M(t) \in C_{-1,0}(0,\infty)$ that is associated with the kernel $K(t)$, such that $K(t),M(t)$ form a mutually associated pair of Sonin's kernels.
Applying integral operator \eqref{Proof-Th1-1}, to equation \eqref{EQ-Th1-1}, we obtain 
\be \label{Proof-Th1-2}
I^{s}_{(M)} [t]
D^{t,*}_{(K)} [\tau] \, X(\tau) = \lambda \, 
I^{s}_{(M)} [t] \, {\cal G}(t,X(t)) 
\sum^{\infty}_{k=1} \delta ((t+\epsilon)/T-k) ,
\ee
where $s>t>\tau>0$.
Using the second fundamental theorem of general fractional operators in the form of equality
\be
I^{s}_{(M)} [t] D^{t,*}_{(K)} [\tau] \, X(\tau) = 
X(s) \, - \, X(0)
\ee
that holds for $X(t) \in C^1_{-1}(0,\infty)$ (see 
Eq. 60 of Theorem 4 in \cite[p.11]{ Luchko2021-1},
Eq. 33 of Theorem 2 in \cite[p.7]{ Luchko2021-2}), we derive
\be \label{Proof-Th1-3a}
X(s) \, - \, X(0)= \lambda \, 
I^{s}_{(M)} [t] \, {\cal G}(t,X(t)) 
\sum^{\infty}_{k=1} \delta ((t+\epsilon)/T-k) .
\ee
Using \eqref{Proof-Th1-1} equation \eqref{Proof-Th1-3a} is written as
\be \label{Proof-Th1-3}
X(s) \, - \, X(0) = \lambda \, 
\int^s_0 d t \, M(s-t) \, {\cal G}(t,X(t)) 
\sum^{\infty}_{k=1} \delta ((t+\epsilon)/T-k) .
\ee
For $Tn<s<T(n+1)$, equation \eqref{Proof-Th1-3} is 
\be \label{Proof-Th1-4}
X(s) \, - \, X(0)= \lambda \, \sum^{n}_{k=1}
\int^s_0 d t \, M(s-t) \, {\cal G}(t,X(t)) \delta ((t+\epsilon)/T-k) .
\ee
Using the equality
\be \label{Proof-delta}
\int^s_0 d t f(t) \delta (t-a) = f(a)
\ee
which holds for and $0<a<s$, and $f(t) \in C^{\infty}(\Omega_a)$,  
where $\Omega_a$ is a neighborhood of point $t=a$, 
equation \eqref{Proof-Th1-4} gives
\be \label{Proof-Th1-5}
X(s) \, - \, X(0)= \lambda \, T \, \sum^{n}_{k=1}
M(s- (kT-\epsilon) ) \, {\cal G}(kT-\epsilon,X((kT-\epsilon) )) .
\ee
Equation \eqref{Proof-Th1-5} leads to solution \eqref{EQ-Th1-2}.

\end{proof}

Let us consider solution for the discrete time points $t=Tk$ with $k \in \mathbb{N}$.

\begin{Theorem} \label{Theorem-2}
Let $K(t) \in C_{-1,0}(0,\infty)$ and 
$X(t) \in C^1_{-1}(0,\infty)$.
The solution of the equation
\be \label{EQ-Th2-1}
D^{t,*}_{(K)} [\tau] \, X(\tau) = \lambda \, {\cal G}(t,X(t)) 
\sum^{\infty}_{k=1} \delta ((t+\epsilon)/T-k) ,
\ee
for the time points $s = Tk-\epsilon$ at $\epsilon \to 0+$, and 
the variables 
\begin{equation} \label{EQ-Th2-2}
X_{k}=\lim_{\epsilon \rightarrow 0+} X(Tk-\epsilon),
\quad (k=1,...,n+1)
\end{equation} 
is the non-local mapping
\be \label{EQ-Th2-3}
X_{n+1}\, = \, X_n + \lambda \, T \, M(T) \, {\cal G}(Tn,X_n) + 
\lambda \, T \, \sum^{n-1}_{k=1} 
\Bigl( M(T(n+1-k)) - M(T(n-k)) \Bigr) \, {\cal G}(Tk,X_k) .
\ee
where $M(t) \in C_{-1,0}(0,\infty)$ is a function that is associated kernel to the kernel $K(t)$, i.e. 
the functions $K(t)$, $M(t)$ form a mutually associated pair of Sonin's kernels.
\end{Theorem}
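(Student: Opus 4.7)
The plan is to derive the map as a direct corollary of Theorem~\ref{Theorem-1} by evaluating the continuous solution at two consecutive ``just-before-a-kick'' instants and subtracting. Specifically, I would identify
\begin{equation}
X_{n+1} \; = \; \lim_{\epsilon \to 0+} X(T(n+1)-\epsilon), \qquad X_{n} \; = \; \lim_{\epsilon \to 0+} X(Tn-\epsilon),
\end{equation}
and note that $T(n+1)-\epsilon$ lies in the interval $(Tn, T(n+1))$ while $Tn-\epsilon$ lies in $(T(n-1), Tn)$, so the formula \eqref{EQ-Th1-2} from Theorem~\ref{Theorem-1} applies to both with summation ranges $1,\dots,n$ and $1,\dots,n-1$ respectively.

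First I would write out the two expressions explicitly, observing that for $t = T(n+1)-\epsilon$ the kernel argument becomes $t-(Tk-\epsilon) = T(n+1-k)$ (the two $\epsilon$'s cancel, which is the point of the author's $\epsilon$-shift convention), and similarly for $t=Tn-\epsilon$ the argument becomes $T(n-k)$. Using the variables $X_k = \lim_{\epsilon\to 0+}X(Tk-\epsilon)$ and $\mathcal{G}(Tk,X_k)$ in place of $\mathcal{G}(Tk-\epsilon,X(Tk-\epsilon))$, this yields
\begin{equation}
X_{n+1} = X(0) + \lambda T \sum_{k=1}^{n} M(T(n+1-k))\,\mathcal{G}(Tk,X_k),
\end{equation}
\begin{equation}
X_{n} = X(0) + \lambda T \sum_{k=1}^{n-1} M(T(n-k))\,\mathcal{G}(Tk,X_k).
\end{equation}

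Next I would subtract these to eliminate $X(0)$, split off the $k=n$ term from the first sum (which contributes $\lambda T M(T)\mathcal{G}(Tn,X_n)$), and combine the remaining sums over the common range $k = 1,\dots,n-1$ into a single sum with integrand $\bigl(M(T(n+1-k))-M(T(n-k))\bigr)\mathcal{G}(Tk,X_k)$. This reproduces \eqref{EQ-Th2-3} exactly.

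The only genuine subtlety, and the one step I would justify carefully, is the $\epsilon \to 0+$ limit. Since $M \in C_{-1,0}(0,\infty)$ is potentially singular only at the origin, we must verify that every kernel argument appearing in the final formula is bounded away from zero: in the $k=n$ contribution the argument is $M(T)$, and in the remaining sum we have $n+1-k \ge 2$ and $n-k \ge 1$, so all arguments lie in $[T,\infty)$ where $M$ is continuous. Hence passage to the limit inside the finite sums is immediate, and the $\epsilon$-shift serves only its intended role of making the product of $\mathcal{G}$ with the Dirac kicks well defined at the jump points, so that Theorem~\ref{Theorem-1} can be invoked in the first place. No further regularity input on $\mathcal{G}$ beyond what Theorem~\ref{Theorem-1} already required is needed.
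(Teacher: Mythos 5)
Your proposal is correct and follows essentially the same route as the paper's own proof: evaluate the Theorem~\ref{Theorem-1} solution at $t=T(n+1)-\epsilon$ and $t=Tn-\epsilon$, pass to the limit in the variables $X_k$, and subtract to eliminate $X(0)$ and isolate the $k=n$ term. Your added observation that all kernel arguments in the final sums lie in $[T,\infty)$, away from the possible singularity of $M$ at the origin, is a justification of the limit passage that the paper leaves implicit.
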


\begin{proof} 
Solution \eqref{EQ-Th1-2}, which is derived in Theorem \ref{Theorem-1}, for $t= T(n+1)-\epsilon $ and $t=Tn-\epsilon$
with $0<\epsilon\ll T$ is given by the equations 
\be \label{Proof-Th2-1}
X(T(n+1)-\epsilon) = X(0) \, + \, \lambda \, \sum^{n}_{k=1} 
M(T(n+1)-Tk) \, {\cal G}(Tk-\epsilon,X(Tk-\epsilon)) ,
\ee
\be \label{Proof-Th2-2}
X(Tn-\epsilon) = X(0) \, + \, \lambda \, T \, \sum^{n-1}_{k=1} 
M(Tn-Tk) \, {\cal G}(Tk-\epsilon,X(Tk-\epsilon)) .
\ee
Using the variables 
\begin{equation} \label{Proof-Th2-3}
X_{k}=\lim_{\epsilon \rightarrow 0+} X(Tk-\epsilon),
\quad (k=1,...,n+1) ,
\end{equation} 
solutions \eqref{Proof-Th2-1} and \eqref{Proof-Th2-2} 
at the limit $\epsilon \to 0+$ give
\be \label{Proof-Th2-4}
X_{n+1} = X(0) \, + \, \lambda \, T \, \sum^{n}_{k=1} 
M(T(n+1-k)) \, {\cal G}(Tk,X_k) ,
\ee
\be \label{Proof-Th2-5}
X_n = X(0) \, + \, \lambda \, T \, \sum^{n-1}_{k=1} 
M(T(n-k)) \, {\cal G}(Tk,X_k) .
\ee
Subtraction equation \eqref{Proof-Th2-5} from equation \eqref{Proof-Th2-4} gives
\be \label{Proof-Th2-6}
X_{n+1}\, - \, X_n = \lambda \, T \, M(T) \, {\cal G}(Tn,X_n) + 
\lambda \, T \, \sum^{n-1}_{k=1} 
\Bigl( M(T(n+1-k)) - M(T(n-k)) \Bigr) \, {\cal G}(Tk,X_k) .
\ee
Equation \eqref{Proof-Th2-6} leads to \eqref{EQ-Th2-3}.

\end{proof}

Let us consider equation with GFD of the Riemann-Liouville type.

\begin{Theorem} \label{Theorem-3}
Let $K(t) \in C_{-1,0}(0,\infty)$ and 
$X(t) \in C^1_{-1}(0,\infty)$.
The equation
\be \label{EQ-Th3-1}
D^{t}_{(K)} [\tau] \, X(\tau) = \lambda \, {\cal G}(t,X(t)) 
\sum^{\infty}_{k=1} \delta ((t+\epsilon)/T-k) ,
\ee
has the solution 
\be \label{EQ-Th3-2}
X(t) \, = \, \lambda \, T \, \sum^{n}_{k=1} 
M(t-(Tk-\epsilon)) \, {\cal G}(Tk-\epsilon,X(Tk-\epsilon)) ,
\ee
if $Tn<t<T(n+1)$.
For the time points $s = Tk-\epsilon$ at $\epsilon \to 0+$, and 
the variables 
\begin{equation} \label{EQ-Th3-Xk}
X_{k}=\lim_{\epsilon \rightarrow 0+} X(Tk-\epsilon),
\quad (k=1,...,n+1) ,
\end{equation} 
solution \eqref{EQ-Th3-2} is represented by the nonlocal mapping
\[
X_{n+1} = X_n \,+ \, \lambda \, T \, M(T) \, {\cal G}(T(n+1),X_n) +
\]
\be \label{EQ-Th3-3}
\lambda \, T \, \sum^{n-1}_{k=1} 
\left( M( T(n+1-k) ) - M( T(n-k) ) \right) \, {\cal G}(kT,X_k) .
\ee
where $M(t) \in C_{-1,0}(0,\infty)$ is a function that is associated kernel to the kernel $K(t)$, i.e. 
the functions $K(t)$, $M(t)$ form a mutually associated pair 
of Sonin's kernels.
\end{Theorem}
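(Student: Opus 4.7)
The plan is to mirror the structure of the proofs of Theorems \ref{Theorem-1} and \ref{Theorem-2}, but replacing the Caputo-type second fundamental theorem with its Riemann--Liouville-type counterpart. The only genuinely new ingredient is the inversion formula for $D^{t}_{(K)}$ on the space $C^1_{-1}(0,\infty)$; everything after that is a transcription of the kick-summation argument already carried out.

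First, apply the associated general fractional integral $I^{s}_{(M)}$ (with $M$ the Sonin partner of $K$) to both sides of \eqref{EQ-Th3-1}. By the second fundamental theorem of Luchko's GFC for the Riemann--Liouville-type derivative on $C^1_{-1}(0,\infty)$, one has $I^{s}_{(M)}[t]\, D^{t}_{(K)}[\tau] X(\tau) = X(s)$, with no additive $X(0)$ correction (this is precisely the structural difference from Theorem \ref{Theorem-1}, and is the reason the representation \eqref{EQ-Th3-2} contains no free initial-value term). Hence
\be
X(s) \;=\; \lambda \int_0^s dt\, M(s-t)\, {\cal G}(t,X(t))\sum_{k=1}^{\infty}\delta((t+\epsilon)/T-k).
\ee
For $Tn<s<T(n+1)$, only the roots $t=Tk-\epsilon$ with $k=1,\dots,n$ lie in $(0,s)$, so the sifting identity \eqref{Proof-delta} (together with the Jacobian $T$ coming from the rescaling in the delta's argument) collapses the integral to the finite sum \eqref{EQ-Th3-2}. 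This is exactly the computation performed in the proof of Theorem \ref{Theorem-1}; the only change is the absence of the $X(0)$ term on the left.

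To obtain the nonlocal mapping, evaluate \eqref{EQ-Th3-2} at $s=T(n+1)-\epsilon$ and at $s=Tn-\epsilon$, and then pass to the limit $\epsilon\to 0+$ using the definition \eqref{EQ-Th3-Xk} of $X_k$. This yields the two identities
\be
X_{n+1} \;=\; \lambda T\sum_{k=1}^{n} M(T(n+1-k))\,{\cal G}(Tk,X_k),\qquad
X_{n} \;=\; \lambda T\sum_{k=1}^{n-1} M(T(n-k))\,{\cal G}(Tk,X_k).
\ee
Subtracting the second from the first and splitting off the $k=n$ term of the upper sum (which has coefficient $M(T)$) produces formula \eqref{EQ-Th3-3}, in complete analogy with the passage from \eqref{Proof-Th2-4}--\eqref{Proof-Th2-5} to \eqref{Proof-Th2-6} in Theorem \ref{Theorem-2}.

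The main obstacle, and really the only point that requires care, is the justification of $I^{s}_{(M)} D^{t}_{(K)} X = X(s)$ for the Riemann--Liouville-type GFD acting on $X\in C^1_{-1}(0,\infty)$. In Luchko's framework this inversion holds without an extra boundary correction precisely when the convolution $(K*X)(t)$ has a vanishing limit at $t\to 0^+$, which is automatic under the regularity hypothesis on $X$ together with $K\in C_{-1,0}(0,\infty)$. Once this is invoked, the remainder of the argument is a routine application of the delta-function sifting property and a telescoping subtraction, exactly as in the two previous theorems.
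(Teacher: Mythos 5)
Your proposal follows essentially the same route as the paper's own proof: apply $I^{s}_{(M)}$, invoke the Riemann--Liouville-type second fundamental theorem $I^{s}_{(M)}[t]\,D^{t}_{(K)}[\tau]X(\tau)=X(s)$ (so no $X(0)$ term appears), sift the delta functions for $Tn<s<T(n+1)$, evaluate at $s=T(n+1)-\epsilon$ and $s=Tn-\epsilon$, pass to the limit, and subtract. Your $k=n$ term carries $\mathcal{G}(Tn,X_n)$, which agrees with the paper's own derivation (the $\mathcal{G}(T(n+1),X_n)$ in the displayed map of the theorem statement appears to be a typo), so the argument is correct and matches the paper's.
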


\begin{proof} 
The proof of this Theorem is similar to the proofs of Theorems \ref{Theorem-1} and \ref{Theorem-2}. 
In this, proof we use the second fundamental theorem of general fractional operators in the form of equality
\be \label{Proof-Th3-0}
I^{s}_{(M)} [t] D^{t}_{(K)} [\tau] \, X(\tau) = X(s) 
\ee
that holds for $X(t) \in C^1_{-1}(0,\infty)$ (see 
Eq. 61 of Theorem 4 in \cite[p.11]{ Luchko2021-1},
Eq. 34 of Theorem 2 in \cite[p.7]{ Luchko2021-2}).

Applying the GFI $I^{s}_{(M)} [t]$ with the kernal $M(t)$, which is associated to the kernel ${\cal G}(t)$ of the GFD $D^{t}_{(K)}$,
to equation \eqref{EQ-Th3-1}, and using \eqref{Proof-Th3-0}, 
we derive
\be \label{Proof-Th3-1a}
X(s) \, = \, \lambda \, 
I^{s}_{(M)} [t] \, {\cal G}(t,X(t)) 
\sum^{\infty}_{k=1} \delta ((t+\epsilon)/T-k) .
\ee
Using \eqref{Proof-Th1-1}, equation \eqref{Proof-Th3-1a} is written as
\be \label{Proof-Th3-2a}
X(s) \, = \, \lambda \, 
\int^s_0 d t \, M(s-t) \, {\cal G}(t,X(t)) 
\sum^{\infty}_{k=1} \delta ((t+\epsilon)/T-k) .
\ee
For $Tn<s<T(n+1)$, equation \eqref{Proof-Th3-2a} is 
\be \label{Proof-Th3-3a}
X(s) \, = \, \lambda \, \sum^{n}_{k=1}
\int^s_0 d t \, M(s-t) \, {\cal G}(t,X(t)) \delta ((t+\epsilon)/T-k) .
\ee
Using the equality \eqref{Proof-delta}, equation \eqref{Proof-Th3-3a} gives
\be \label{Proof-Th3-5a}
X(s) \, = \, \lambda \, T \, \sum^{n}_{k=1}
M(s- (kT-\epsilon) ) \, {\cal G}(kT-\epsilon,X((kT-\epsilon) )) .
\ee

Using \eqref{Proof-Th3-5a} for $s= T(n+1)-\epsilon$ and
$s= Tn-\epsilon$, weobtain
\be \label{Proof-Th3-1b}
X(T(n+1)-\epsilon) \, = \, \lambda \, T \, \sum^{n}_{k=1} 
M(T(n+1)-Tk) \, {\cal G}(Tk-\epsilon,X(Tk-\epsilon)) ,
\ee
\be \label{Proof-Th3-2b}
X(Tn-\epsilon) \, = \, \lambda \, T \, \sum^{n-1}_{k=1} 
M(Tn-Tk) \, {\cal G}(Tk-\epsilon,X(Tk-\epsilon)) .
\ee
Using variables \eqref{EQ-Th3-Xk}, solutions \eqref{Proof-Th3-1b} and \eqref{Proof-Th3-2b} at the limit $\epsilon \to 0+$ give
\be \label{Proof-Th3-1c}
X_{n+1} = \, \lambda \, T \, \sum^{n}_{k=1} 
M(T(n+1-k)) \, {\cal G}(Tk,X_k) ,
\ee 
\be \label{Proof-Th3-2c}
X_n = \, \lambda \, T \, \sum^{n-1}_{k=1} M(T(n-k)) \, {\cal G}(Tk,X_k) .
\ee 
Subtracting from equation \eqref{Proof-Th3-1c} 
equation \eqref{Proof-Th3-2c}, we obtain \eqref{EQ-Th3-3}.

\end{proof}


Let us consider equation with general fractional integral operator and kicks.

\begin{Theorem} \label{Theorem-5}
Let $M(t) \in C_{-1,0}(0,\infty)$ and 
$X(t) \in C_{-1}(0,\infty)$.
Then the integral equation
\be \label{EQ-Th5-1}
I^{t}_{(M)} [\tau] \, X(\tau) = \lambda \, {\cal G}(t,X(t)) 
\sum^{\infty}_{k=1} \delta ((t+\epsilon)/T-k) ,
\ee
has the solution 
\be \label{EQ-Th5-2}
X(t) \, = \, \lambda \, T \, \sum^{n}_{k=1} 
K^{(1)}(t-(Tk-\epsilon)) \, {\cal G}(Tk-\epsilon,X(Tk-\epsilon)) ,
\ee
for $Tn<t<T(n+1)$, where $K^{(1)}(z) = (dK(t)/dt)_{t=z}$, 
and $K(t) \in C^1_{-1,0}(0,\infty)$ is associated kernel to the kernel $M(t)$, i.e. the functions $K(t)$, $M(t)$ form a mutually associated pair of Sonin's kernels.
\end{Theorem}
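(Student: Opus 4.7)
The plan is to invert the general fractional integral in \eqref{EQ-Th5-1} by applying the associated Riemann--Liouville type general fractional derivative $D^{t}_{(K)}$, and then to evaluate the right-hand side using the delta functions exactly as in the proof of Theorem \ref{Theorem-3}, but in the reverse order of operations.

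First I would apply the operator $D^{s}_{(K)}[t]$ with the Sonin-associated kernel $K(t)$ to both sides of \eqref{EQ-Th5-1}. By the first fundamental theorem of general fractional calculus (the analog of Eq.~61 of Theorem 4 in \cite{Luchko2021-1}), we have
\be
D^{s}_{(K)}[t]\, I^{t}_{(M)}[\tau]\, X(\tau) = X(s)
\ee
for $X(t) \in C_{-1}(0,\infty)$, which immediately isolates $X(s)$ on the left. On the right-hand side I would use the definition
\be
D^{s}_{(K)}[t]\, f(t) = \frac{d}{ds}\int_0^s d t\, K(s-t)\, f(t),
\ee
with $f(t) = \lambda\, \mathcal{G}(t,X(t)) \sum_{k=1}^{\infty} \delta((t+\epsilon)/T - k)$.

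Next, assuming $Tn<s<T(n+1)$, only the terms $k=1,\dots,n$ of the delta sum contribute to the $t$-integral over $[0,s]$. Applying the sifting property \eqref{Proof-delta}, exactly as in the proof of Theorem \ref{Theorem-3}, yields
\be
\int_0^s d t\, K(s-t)\, \lambda\, \mathcal{G}(t,X(t)) \sum_{k=1}^{\infty}\delta((t+\epsilon)/T-k) = \lambda\, T\, \sum_{k=1}^{n} K(s-(Tk-\epsilon))\, \mathcal{G}(Tk-\epsilon, X(Tk-\epsilon)).
\ee
Differentiating this finite sum with respect to $s$ (for $s$ in the open interval $(Tn, T(n+1))$, so the upper limit $n$ of the sum is locally constant) produces
\be
X(s) = \lambda\, T\, \sum_{k=1}^{n} K^{(1)}(s-(Tk-\epsilon))\, \mathcal{G}(Tk-\epsilon, X(Tk-\epsilon)),
\ee
which is exactly \eqref{EQ-Th5-2}.

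The main obstacle is ensuring that all manipulations are legitimate in this distributional setting: the hypothesis $K(t)\in C^{1}_{-1,0}(0,\infty)$ stated in the theorem is what makes the differentiation $K^{(1)}$ well defined and the interchange of $d/ds$ with the finite sum justified, while the conditions $M(t) \in C_{-1,0}(0,\infty)$ and the Sonin pairing with $K(t)$ are precisely what is needed to apply the first fundamental theorem in the required function class $C_{-1}(0,\infty)$. One must also verify, as in the remarks following \eqref{F-N-delta}, that the product $\mathcal{G}(t,X(t))\,\delta((t+\epsilon)/T-k)$ is well defined at $t = Tk-\epsilon$; the shift by $\epsilon$ places this evaluation in a neighborhood where $X$ is continuous, so \eqref{Proof-delta} applies. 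All remaining steps are routine repetitions of the pattern established in Theorems \ref{Theorem-1}--\ref{Theorem-3}.
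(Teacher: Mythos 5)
Your proposal is correct and follows essentially the same route as the paper's own proof: apply the associated Riemann--Liouville type GFD $D^{s}_{(K)}$, invoke the first fundamental theorem of GFC to isolate $X(s)$, restrict the delta sum to $k=1,\dots,n$ for $Tn<s<T(n+1)$, apply the sifting property, and then differentiate the resulting finite sum term by term to obtain $K^{(1)}$. Your added remarks on why $K\in C^1_{-1,0}(0,\infty)$ and the $\epsilon$-shift are needed are consistent with the paper's (more implicit) treatment.
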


\begin{proof} 
Let us use the general fractional derivative of the Riemann-Liouville type
\begin{equation} \label{Proof-Th5-1}
D^{s}_{(K)} [t] X(t) = 
\frac{d}{ds} (K \, * \, X)(s) = 
\frac{d}{ds} \int^s_0 d t \, K(s-t) \, X(t) 
\end{equation}
with the kernel $K(t) \in C_{-1,0}(0,\infty)$ that is associated kernel to the kernel $M(t)$ of GFI, such that $K(t),M(t)$ form a mutually associated pair of Sonin's kernels.
Applying operator \eqref{Proof-Th5-1}, to equation \eqref{EQ-Th5-1}, we obtain 
\be \label{Proof-Th5-2}
D^{s}_{(K)} [t]
I^{t}_{(M)} [\tau] \, X(\tau) = \lambda \, 
D^{s}_{(K)} [t] \, {\cal G}(t,X(t)) 
\sum^{\infty}_{k=1} \delta ((t+\epsilon)/T-k) ,
\ee
where $s>t>\tau>0$.
Using the first fundamental theorem of general fractional operators in the form of equality
\be
D^{s}_{(M)} [t] I^{t}_{(K)} [\tau] \, X(\tau) = X(s) 
\ee
that holds for $X(t) \in C^1_{-1}(0,\infty)$ (see 
Eq. 51 of Theorem 3 in \cite[p.9]{Luchko2021-1},
Eq. 31 of Theorem 1 in \cite[p.6]{Luchko2021-2}), 
equation \eqref{Proof-Th5-2} takes the form
\be \label{Proof-Th5-2b}
X(s) \, = \, \lambda \, 
D^{s}_{(K)} [t] \, {\cal G}(t,X(t)) 
\sum^{\infty}_{k=1} \delta ((t+\epsilon)/T-k) ,
\ee
Then using expression \eqref{Proof-Th5-1}, we derive
\be \label{Proof-Th5-3}
X(s) \, = \, \lambda \, 
\frac{d}{ds} \int^s_0 d t \, K(s-t) \, {\cal G}(t,X(t)) 
\sum^{\infty}_{k=1} \delta ((t+\epsilon)/T-k) .
\ee
For $Tn<s<T(n+1)$, equation \eqref{Proof-Th5-3} is 
\be \label{Proof-Th5-4}
X(s) \, = \, \lambda \, \sum^{n}_{k=1}
\frac{d}{ds} \int^s_0 d t \, K(s-t) \, {\cal G}(t,X(t)) \delta ((t+\epsilon)/T-k) .
\ee
Using the equality \eqref{Proof-delta}
equation \eqref{Proof-Th5-4} gives
\be \label{Proof-Th5-5}
X(s) \, = \, \lambda \, T \, \frac{d}{ds} \sum^{n}_{k=1}
K(s- (kT-\epsilon) ) \, {\cal G}(kT-\epsilon,X((kT-\epsilon) )) .
\ee
For for $Tn<t<T(n+1)$, we obtain
\be \label{Proof-Th5-6}
X(s) \, = \, \lambda \, T \, \sum^{n}_{k=1}
K^{(1)}(s- (kT-\epsilon) ) \, {\cal G}(kT-\epsilon,X((kT-\epsilon) )) .
\ee
Equation \eqref{Proof-Th5-5} leads to solution \eqref{EQ-Th5-2}.

\end{proof}

\begin{Theorem} \label{Theorem-6}
Let $M(t) \in C_{-1,0}(0,\infty)$ and 
$X(t) \in C_{-1}(0,\infty)$.
The solution of the equation
\be \label{EQ-Th6-1}
I^{t}_{(M)} [\tau] \, X(\tau) = \lambda \, {\cal G}(t,X(t)) 
\sum^{\infty}_{k=1} \delta ((t+\epsilon)/T-k) ,
\ee
for the time points $s = Tk-\epsilon$ at $\epsilon \to 0+$, and 
the variables 
\begin{equation} \label{EQ-Th6-2}
X_{k}=\lim_{\epsilon \rightarrow 0+} X(Tk-\epsilon),
\quad (k=1,...,n+1)
\end{equation} 
is represented as the nonlocal mapping
\[
X_{n+1}\, = \, X_n + \, \lambda \, T \, K^{(1)}(T) \, {\cal G}(Tn,X_n) 
+ 
\]
\be \label{EQ-Th6-3}
\, \lambda \, T \, \sum^{n}_{k=1} 
\Bigl( K^{(1)}(T(n+1-k)) - K^{(1)}(T(n-k)) \Bigr)\, {\cal G}(Tk,X_k) ,
\ee
where $K^{(1)}(z) = (dK(t)/dt)_{t=z}$, 
and $K(t) \in C^1_{-1,0}(0,\infty)$ is a function that is associated kernel to the kernel $M(t)$, i.e. 
the functions $K(t)$, $M(t)$ form a mutually associated pair of Sonin's kernels.
\end{Theorem}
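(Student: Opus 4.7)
The plan is to follow exactly the schema used in passing from Theorem \ref{Theorem-1} to Theorem \ref{Theorem-2}: the continuous-time solution \eqref{EQ-Th5-2} supplied by Theorem \ref{Theorem-5} is already available, and one only needs to sample it at two consecutive kick instants and subtract. Thus Theorem \ref{Theorem-5} enters as a black box and no further appeal to the fundamental theorems of GFC is required here.

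First I would evaluate the solution formula \eqref{EQ-Th5-2} at $s = T(n+1)-\epsilon$ and at $s = Tn-\epsilon$. The kick with index $k=n$ lies inside the interval $(Tn, T(n+1))$ but outside $(T(n-1), Tn)$, so the upper summation limit is $n$ in the first case and $n-1$ in the second. The $\epsilon$-shift inside the argument $K^{(1)}(s-(Tk-\epsilon))$ cancels against the $-\epsilon$ in $s$, leaving the clean arguments $K^{(1)}(T(n+1-k))$ and $K^{(1)}(T(n-k))$ respectively.

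Second, I would pass to the limit $\epsilon \to 0+$. Because the sums are finite, $K\in C^1_{-1,0}(0,\infty)$ so $K^{(1)}$ is continuous on $(0,\infty)$, and all evaluation points $T(n+1-k)$, $T(n-k)$ are strictly positive, the limit commutes with the sums and yields
\[
X_{n+1} = \lambda\, T \sum_{k=1}^{n} K^{(1)}(T(n+1-k))\,{\cal G}(Tk,X_k), \qquad
X_{n} = \lambda\, T \sum_{k=1}^{n-1} K^{(1)}(T(n-k))\,{\cal G}(Tk,X_k).
\]
Subtracting these identities and splitting off the $k=n$ term of the first sum, which has no counterpart in the second, the remaining indices $k=1,\dots,n-1$ pair term-by-term and supply the coefficient $K^{(1)}(T(n+1-k)) - K^{(1)}(T(n-k))$ in front of ${\cal G}(Tk,X_k)$; the isolated term contributes $\lambda\, T\, K^{(1)}(T)\, {\cal G}(Tn,X_n)$. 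Together these reproduce the mapping \eqref{EQ-Th6-3}.

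The main obstacle is purely bookkeeping: one must correctly track the change of the upper summation index as $s$ crosses the kick instant $t=Tn-\epsilon$, and verify that the $\epsilon$-shift inside $K^{(1)}$ does not bring one to the singular point $t=0$ of the kernel. Both points are immediate by inspection, since all arguments in the passage to the limit are integer multiples of $T$. No genuine analytic difficulty appears, because the substantive content, namely the inversion of $I^{t}_{(M)}$ by $D^{s}_{(K)}$ via the first fundamental theorem of general fractional calculus, has already been absorbed in Theorem \ref{Theorem-5}.
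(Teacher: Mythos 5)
Your proof is correct and follows essentially the same route as the paper's own proof: evaluate the solution of Theorem \ref{Theorem-5} at $s=T(n+1)-\epsilon$ and at $s=Tn-\epsilon$, pass to the limit $\epsilon\to 0+$ using the continuity of $K^{(1)}$ at the strictly positive points $T(n+1-k)$, $T(n-k)$, and subtract. Your upper summation limit $n-1$ in the resulting difference agrees with the paper's own derivation; the limit $n$ printed in the theorem statement \eqref{EQ-Th6-3} is evidently a misprint, since the $k=n$ term would involve $K^{(1)}(0)$ and duplicate the separated term $\lambda\,T\,K^{(1)}(T)\,{\cal G}(Tn,X_n)$.
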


\begin{proof} 
Solution \eqref{EQ-Th5-2}, which is derived in Theorem \ref{Theorem-5} for $t= T(n+1)-\epsilon $ and $t=Tn-\epsilon$
with $0<\epsilon\ll T$ is given by the equations 
\be \label{Proof-Th6-1}
X(T(n+1)-\epsilon) \, = \, \lambda \, T \, \sum^{n}_{k=1} 
K^{(1)}(T(n+1)-Tk) \, {\cal G}(Tk-\epsilon,X(Tk-\epsilon)) ,
\ee
\be \label{Proof-Th6-2}
X(Tn-\epsilon) \, = \, \lambda \, T \, \sum^{n-1}_{k=1} 
K^{(1)}(Tn-Tk) \, {\cal G}(Tk-\epsilon,X(Tk-\epsilon)) ,
\ee
Using the variables \eqref{EQ-Th6-2}, equations
\eqref{Proof-Th6-1} and \eqref{Proof-Th6-2} 
at the limit $\epsilon \to 0+$, give
\be \label{Proof-Th6-4}
X_{n+1} \, = \, \lambda \, T \, \sum^{n}_{k=1} 
K^{(1)}(T(n+1-k)) \, {\cal G}(Tk,X_k) ,
\ee
\be \label{Proof-Th6-5}
X_{n} \, = \, \lambda \, T \, \sum^{n-1}_{k=1} 
K^{(1)}(T(n-k)) \, {\cal G}(Tk,X_k) ,
\ee
Subtraction of equation \eqref{Proof-Th2-5} from equation \eqref{Proof-Th2-4} gives
\[
X_{n+1}\, - \, X_n = \, \lambda \, T \, K^{(1)}(T) \, {\cal G}(Tn,X_n) 
+ 
\]
\be \label{Proof-Th6-6}
\, \lambda \, T \, \sum^{n-1}_{k=1} 
\left(K^{(1)}(T(n+1-k)) -K^{(1)}(T(n-k)) \right)\, {\cal G}(Tk,X_k) .
\ee
Equation \eqref{Proof-Th6-6} gives \eqref{EQ-Th6-3}.

\end{proof} 


\section{Examples Non-Locality in form of Sonin Kernels}

In this section, examples of the pair of the Sonin kernels are presented.

Note that if the kernel $M(t)$ is associated to kernel $K(t)$,  then the kernel $K(t)$ is associated to $M(t)$. 
Therefore, if we have the operators
\begin{equation} \label{MK-KM}
I^{t}_{(M)} [\tau] X(\tau) = (M \, * \, X)(t), \quad
D^{t,*}_{(K)} [\tau] X(\tau) = (K \, * \, X^{(1)})(t) ,
\end{equation}
where $M(t), K(t) \in \mathcal{S}_{-1}$, then we can use the operators
\begin{equation} \label{KM-MK}
I^{t}_{(K)} [\tau] X(\tau) = (K \, * \, X)(t), \quad
D^{t,*}_{(M)} [\tau] X(\tau) = (M \, * \, X^{(1)})(t) .
\end{equation}
 
Let us give examples of the kernels that satisfy the Sonin condition
\begin{equation} 
\left(M \, * \, K\right)(t) =
\int^{t}_0 M(t-\tau) \, K(\tau) \, d\tau \, = \, 1 .
\end{equation} 

\begin{Example} 
The following kernels
\begin{equation} \label{Example-1-M}
M(t) = h_{\alpha}(t) = \frac{t^{\alpha -1}}{\Gamma(\alpha)},
\end{equation}
\begin{equation} \label{Example-1-K}
K(t) = h_{1-\alpha}(t) = \frac{t^{-\alpha}}{\Gamma(1-\alpha)},
\end{equation}
where $0< \alpha <1$, 
are mutually associated Sonin kernels.
For these kernels, the GFI and GFD are well-known Riemann-Liouville and Caputo fractional operators \cite{FC4}.
\end{Example}

\begin{Example}
The pair of the Sonin kernels \cite[p.3628]{Samko1}:
\begin{equation} \label{Example-2-M}
M(t) = h_{\alpha,\lambda}(t) \, = \, 
\frac{t^{\alpha-1}}{\Gamma(\alpha)} e^{- \lambda \, t}
\end{equation}
\begin{equation} \label{Example-2-K}
K(t) = h_{1-\alpha,\lambda}(t) \, + \, 
\frac{\lambda^\alpha}{\Gamma(1-\alpha)} \, 
\gamma(1-\alpha,\lambda t),
\end{equation}
and vice versa,
where $0<\alpha <1$, and $\lambda \ge 0$, $t>0$,
and $\gamma(\beta,t)$ is the incomplete gamma function
\begin{equation} 
\gamma(\beta,t) = \int^t_0 \tau^{\beta -1} e^{-\tau}\, d\tau , 
\end{equation}
where $t>0$.
\end{Example}

\begin{Example} 
The following pair of kernels
\begin{equation} \label{Example-3-M}
M(t) = (\sqrt{t})^{\alpha-1} \, J_{\alpha-1}(2\sqrt{t}) , 
\end{equation}
\begin{equation} \label{Example-3-K}
K(t) = (\sqrt{t})^{-\alpha} \, I_{-\alpha}(2\sqrt{t}),
\end{equation} 
and vice versa, are Sonin kernels 
(see \cite{Sonine-1,Sonine-2},\cite[p.3627]{Samko1})
if $0<\alpha <1$, where
\begin{equation} 
J_\nu (t) = \sum^{\infty}_{k=0} 
\frac{(-1)^k(t/2)^{2k+\nu}}{k!\Gamma(k+\nu+1)},
\quad
I_\nu (t) = \sum_{k=0}^{\infty} 
\frac{(t/2)^{2k+\nu}}{k!\Gamma(k+\nu+1)}
\end{equation}
are the Bessel and the modified Bessel functions, respectively. 
\end{Example}

\begin{Example} 
As special case of kernels 
\eqref{Example-3-M} and \eqref{Example-3-K}, we can consider the pair
\begin{equation} \label{Example-4-M}
M(t) = \frac{\cos(2 \sqrt{t})}{\sqrt{\pi \, t}} , 
\end{equation}
\begin{equation}
K(t) = \frac{\cosh(2 \sqrt{t})}{\sqrt{\pi \, t}} ,
\end{equation}
and vice versa. 
\end{Example}

\begin{Example} 
The example of the Sonin pair is the kernel pair \cite{Hanyga} 
\begin{equation} \label{Example-5-N}
M(t) = h_{1-\beta+\alpha}(t) \, + \, h_{1-\beta}(t),
\end{equation}
\begin{equation} \label{Example-5-K}
K(t) = t^{\beta -1} \, E_{\alpha,\beta} [-t^{\alpha} ],
\end{equation}
where $0<\alpha < \beta <1$, and $E_{\alpha,\beta}[z]$ is the two-parameters Mittag-Leffler function 
\begin{equation} \label{ML}
E_{\alpha,\beta}[z] = \sum^{\infty}_{k=0} 
\frac{z^k}{\Gamma(\alpha\, k + \beta)}, 
\end{equation}
where $\alpha >0$, and $\beta, z\in \mathbb{C}$.
\end{Example}

\begin{Example} 
The following Sonin pair is described by the kernels 
(see Eq. 7.15 in \cite[p.3629]{Samko1}):
\begin{equation} \label{Example-6-M}
M(t) = t^{\alpha-1} \Phi(\beta, \alpha; -\lambda \, t) , 
\end{equation}
\begin{equation} \label{Example-6-K}
K(t) = \frac{\sin (\pi \alpha)}{\pi} \, t^{-\alpha} \, 
\Phi(-\beta,1- \alpha; -\lambda \, t) , 
\end{equation}
or vice versa, where $0<\alpha <1$, and $\Phi(\beta, \alpha; z)$ is the Kummer function
\begin{equation} 
\Phi(\beta, \alpha; z) = \sum^{\infty}_{k=0} 
\frac{(\beta)_k}{(\alpha)_k} \, \frac{z^k}{k!} .
\end{equation}
\end{Example}

\begin{Example} 
The following pair of the Sonin kernels 
(see Eq. 7.16, 7.18 in \cite[p.3629]{Samko1}):
\begin{equation} \label{Example-7-M}
M(t) = 1 \, + \, \frac{\lambda}{\Gamma(\alpha) \sqrt{t}} , 
\end{equation}
\begin{equation} \label{Example-7-K}
K(t) = \frac{1}{ \sqrt{\pi t}} \, - \, 
\lambda \, e^{\lambda^2 \, t} \, \text{erfc} (\lambda \, \sqrt(t)) , 
\end{equation}
or vice versa, where $\lambda >0$, and $\text{erfc}(z)$ is the complementary error function
\begin{equation}
\text{erfc} (z) = 1 \, - \, \text{erf} (z) = 1- \frac{2}{\sqrt{\pi}} \int^{t}_0 e^{-z^2} \, dz . 
\end{equation}
\end{Example}

\begin{Example} 
The following Sonin pair of the kernels 
(see Eq. 7.17, 7.19 in \cite[pp.3629-3630]{Samko1}):
\begin{equation} \label{Example-8-M}
M(t) = 1 \, - \, \frac{\lambda}{\Gamma(\alpha)} \, t^{\alpha-1} , 
\end{equation}
\begin{equation} \label{Example-8-K}
K(t) = \lambda \, t^{-\alpha} \, 
E_{1-\alpha,1-\alpha} [\lambda \, t^{1-\alpha}] ,
\end{equation}
or vice versa, where $\lambda>0$.
\end{Example}

\begin{Example} 
As a Sonin's kernels, we can use some functions with the power-logarithmic singularities at the origin \cite[pp.3627-3630]{Samko1}. 
The following pair of the Sonin kernels 
(see Eq. 7.22-7.24 in \cite[p.3630]{Samko1}) with the power-logarithmic singularities:
\begin{equation} \label{Example-9-M}
M(t) = \frac{A-\ln(t)}{\Gamma(\alpha)} t^{\alpha-1} , 
\end{equation}
\begin{equation} \label{Example-9-K}
K(t) = \mu_{\alpha,h} (t) = \int^{\infty}_0 
\frac{t^{z-\alpha} \, e^{h\, z}}{\Gamma(z+1-\alpha)} dz ,
\end{equation}
or vice versa, where $\mu_{\alpha,h}(t)$ is the Volterra function with
\[
h=\frac{\Gamma^{\prime}(\alpha)}{\Gamma(\alpha)} \, - \, A .
\]
We assume that nonlocal discrete maps, which are derived from equations with GFD and GFI for these kernels, are described by the same equations as for kernels belong to $C_{-1}(0,\infty)$.
\end{Example}

The discrete mappings with an arbitrary function $N={\cal G}(t,X(t))$ are called universal in nonlinear dynamics. 
We can also called the proposed mappings of the GFDynamics
as universal general maps with nonlocality in time.
The universality of the mapping is due to the use of an arbitrary nonlinear function $N={\cal G}(t,X(t))$, and the generality is due to the use of almost arbitrary Sonin kernels of operators $M(t)$ and $K(t)$.


\section{General Fractional Dynamics of Arbitrary Orders}

In this section, the general fractional dynamics will be described by equations with GFI and GFD of arbitrary order, proposed in the work of Yu. Luchko \cite{Luchko2021-2}.

\subsection{General Fractional Calculus of Arbitrary Orders}

In paper \cite{Luchko2021-2}, the Sonin set of the kernel pairs is defined in the following form. 

\begin{Definition} \label{DEF-Sonine}
Let the functions $\mu(t),\, \nu(t)$ belong to the function space
\begin{equation} \label{C(-10)}
C_{-1,0}(0,\infty)\, := \, \{X:\ X(t) = t^{p} Y(t), \ 
t \in (0, \infty),\ -1<p<0, \ Y(t) \in C[0,\infty)\} ,
\end{equation} 
and the Sonin condition 
\begin{equation} \label{SonineCondition}
( \mu \, *\, \nu )(t) = \{1 \} 
\end{equation}
holds for $t \in (0, \infty)$. 
The set of such kernel pairs is called the Sonin set $\mathcal{S}_{-1}$.
For the kernel $\mu (t)$, the kernel $\nu(t)$ is called its associate Sonin kernel.
\end{Definition}

In paper \cite{Luchko2021-2}, a generalization of Definition \ref{DEF-Sonine} was proposed. 

\begin{Definition} \label{DEF-Luchko}
Let the function $M(t), \, K(t)$ belongs to the function spaces
\begin{equation}
M(t) \in C_{-1}(0,\infty), \quad 
K(t) \in C_{-1,0}(0,\infty),
\end{equation}
where $C_{-1,0}(0,\infty)$ is defined by \eqref{C(-10)}, 
\begin{equation} 
C_{-1}(0,\infty)\, := \, \{X:\ X(t) = t^{p} Y(t), \ 
t \in (0, \infty),\ p>-1, \ Y(t) \in C[0,\infty)\} ,
\end{equation} 
and the Luchko condition 
\begin{equation} \label{LuchkoCondition}
( M \, * \, K )(t) = \{1 \}^n = 
h_{n}(t) = \frac{t^{n-1}}{(n-1)!} 
\end{equation}
holds for $t \in (0, \infty)$. 
The set of such kernel pairs is called the Luchko set $\mathcal{L}_{n,0}$.
\end{Definition}

In paper \cite{Luchko2021-2}, it was proposed an approach for construction a pair $(M(t),\, K(t))$ of the kernels from the Luchko set $\mathcal{L}_{n,0}$ by using the kernels 
$\mu (t),\, \nu(t)$ from the Sonin set $\mathcal{S}_{-1}$. 
This approach is formulated by the following theorem that was proved in \cite{Luchko2021-2}.

\begin{Theorem} \label{Theorem-MK-1}
Let $(\mu(t),\ \nu (t))$ be a Sonin pair of kernels from $\mathcal{S}_{-1}$. 
Then the pair $(M(t),\ K(t))$ of the kernels
\begin{equation} \label{Luchko-GFC2}
M(t) \, = \, (\{1\}^{n-1}\, *\, \mu)(t), \quad 
K (t) \, = \, \nu(t)
\end{equation}
belongs to the set $\mathcal{L}_{n,0}$.
\end{Theorem}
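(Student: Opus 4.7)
The plan is to verify the two defining requirements of the Luchko set $\mathcal{L}_{n,0}$ for the candidate pair $(M(t), K(t))$: first, that $M(t) \in C_{-1}(0,\infty)$ and $K(t) \in C_{-1,0}(0,\infty)$; and second, that the Luchko condition $(M \, * \, K)(t) = h_n(t) = t^{n-1}/(n-1)!$ holds for $t \in (0,\infty)$. The second requirement is the crucial algebraic identity, and my strategy is to reduce it to the already-available Sonin condition via associativity of the Laplace convolution.

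For the function space check, the kernel $K(t) = \nu(t)$ lies in $C_{-1,0}(0,\infty)$ by hypothesis, so nothing is required there. For $M(t) = (\{1\}^{n-1} \, * \, \mu)(t)$, the input $\mu(t)$ belongs to $C_{-1,0}(0,\infty)$, meaning $\mu(t) = t^{p}\, Y(t)$ for some $-1 < p < 0$ with $Y \in C[0,\infty)$. The $(n-1)$-fold convolution with the constant function $1$ amounts to integrating $n-1$ times, which shifts the admissible exponent from $p$ to $p + n - 1 > -1$. Hence $M(t) \in C_{-1}(0,\infty)$, matching the requirement in Definition \ref{DEF-Luchko}. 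I would state this step briefly, invoking the elementary estimate on convolutions of functions in $C_{-1,0}(0,\infty)$ with $h_k(t) = t^{k-1}/(k-1)!$.

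For the Luchko condition, I would write
\begin{equation}
(M \, * \, K)(t) \, = \, \bigl( (\{1\}^{n-1} \, * \, \mu) \, * \, \nu \bigr)(t) \, = \, \bigl( \{1\}^{n-1} \, * \, (\mu \, * \, \nu) \bigr)(t),
\end{equation}
using associativity of the Laplace convolution (justified since all factors lie in $C_{-1}(0,\infty)$). Then by the Sonin condition $(\mu \, * \, \nu)(t) = \{1\}$ from the hypothesis $(\mu,\nu) \in \mathcal{S}_{-1}$, this becomes $\{1\}^{n-1} \, * \, \{1\} = \{1\}^n = h_n(t)$, which is exactly \eqref{LuchkoCondition}.

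The main obstacle is not the algebra, which is a one-line calculation once associativity is invoked, but rather ensuring the bookkeeping on function spaces is correct so that associativity and the Sonin identity apply in the required sense on $(0,\infty)$. In particular, I would want to make explicit the elementary lemma that if $f \in C_{-1,0}(0,\infty)$ then $(h_k \, * \, f)(t)$ lies in $C_{-1}(0,\infty)$ for every $k \in \mathbb{N}$, and that convolution is associative on this class; these facts are part of the standard machinery behind Luchko's framework in \cite{Luchko2021-2} and can be cited rather than reproved.
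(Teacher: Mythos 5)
Your proof is correct. Note that the paper itself does not prove this theorem at all --- it simply states ``Theorem \ref{Theorem-MK-1} is proved by Yu.~Luchko in \cite{Luchko2021-2}'' --- so there is no in-paper argument to compare against; your derivation (associativity of the Laplace convolution reducing $(M*K)(t)$ to $\{1\}^{n-1}*(\mu*\nu) = \{1\}^n = h_n(t)$, together with the exponent-shift check placing $M$ in $C_{-1}(0,\infty)$) is the standard argument and is exactly the route taken in the cited reference. Your care about the function-space bookkeeping (closure of $C_{-1}(0,\infty)$ under convolution and associativity on that class) is the right thing to flag; those facts are indeed established in \cite{Luchko2021-2} and may legitimately be cited.
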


Theorem \ref{Theorem-MK-1} is proved by Yu. Luchko in \cite{Luchko2021-2}.

Using Definition \ref{DEF-Luchko} and Theorem \ref{Theorem-MK-1},
the GFI and GFD of arbitrary order are defined in \cite{Luchko2021-2}. 

\begin{Definition} \label{GFOperators}
Let $(M(t),\ K(t))$ be pair of the kernels \eqref{Luchko-GFC2} from the Luchko set $\mathcal{L}_{n,0}$. 
The GFI with the kernel $M(t)$ is 
\begin{equation} \label{L-def-GFI}
I^{t,n}_{(M)} [\tau]\, X(\tau) \, := 
\int^t_0 d \tau \, M(t-\tau) \, X(\tau) \, = \, 
\int^t_0 d \tau \, (\{1\}^{n-1} \, *\, \mu)(t-\tau) \, X(\tau)
, \end{equation}
where $t >0$.
The GFDs with the kernel $K(t)$ are 
\begin{equation} \label{L-def-GFD-1} 
D^{t,n}_{(K)}[\tau] X(\tau) := 
\frac{d^n}{dt^n}\, \int^t_0 d\tau \, K(t-\tau) \, X(\tau) 
\, = \, 
\frac{d^n}{dt^n}\, \int^t_0 d\tau \, 
\nu(t-\tau) \, X(\tau) ,
\end{equation}
\begin{equation} \label{L-def-GFD-2} 
D^{t,n,*}_{(K)}[\tau] X(\tau) := 
\ \int^t_0 d\tau \, K(t-\tau) \, X^{(n)} (\tau) \, = \, 
\int^t_0 d\tau \, \nu(t-\tau) \, X^{(n)}(\tau) ,
\end{equation}
where $t >0$ and $ X^{(n)} (\tau) = d^n X(\tau)/ d\tau^n$.
\end{Definition}


In paper \cite{Luchko2021-2}, the fundamental theorems of GFC for the GFI \eqref{L-def-GFI} and GFDs
\eqref{L-def-GFD-1}, \eqref{L-def-GFD-2} are proved. 

\begin{Theorem}[First Fundamental Theorem for GFC of arbitrary order] \label{Theorem-7}

Let $(M(t),\ K(t))$ be a kernel pair from the Luchko set $\mathcal{L}_{n,0}$. 

If $X(t) \in C_{-1}(0,\infty)$, then
\begin{equation} \label{FTLn}
D^{t,n}_{(K)} [\tau] \, I^{\tau,n}_{(M)} [s] \, X(s) = X(t) .
\end{equation}

If $X(t) \in C_{-1,(K)}(0,\infty)$, then 
\begin{equation} \label{FTCn}
D^{t,n,*}_{(K)} [\tau] \, I^{\tau,n}_{(M)} [s] \, X(s) = X(t) ,
\end{equation}
where function $X(t)$ belongs to $C_{-1,(K)}(0,\infty)$, if it can be represented as $X(t) = I^{t,n}_{(K)} [\tau] \, Y(\tau)$, where $Y(t) \in C_{-1}(0,\infty)$. 
\end{Theorem}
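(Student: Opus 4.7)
My plan is to unwind the composition of the two operators using Laplace convolution, collapse the kernel structure via the Sonin condition $(\mu * \nu)(t) = \{1\}$, and then invoke the elementary fact that the $n$-fold Riemann iterated integral kernel is $\{1\}^n = h_n$, so differentiation $n$ times inverts it.

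For the Riemann--Liouville-type identity \eqref{FTLn}, I plug in the explicit form $M = \{1\}^{n-1} * \mu$ and $K = \nu$ from Theorem~\ref{Theorem-MK-1} and compute
\begin{equation*}
D^{t,n}_{(K)}[\tau]\, I^{\tau,n}_{(M)}[s]\, X(s) = \frac{d^n}{dt^n}\bigl( \nu * \{1\}^{n-1} * \mu * X \bigr)(t).
\end{equation*}
Using associativity and commutativity of Laplace convolution, I rearrange this as $\frac{d^n}{dt^n}\bigl( \{1\}^{n-1} * (\mu * \nu) * X \bigr)(t)$, and the Sonin condition collapses $(\mu * \nu)$ to $\{1\}$, yielding $\frac{d^n}{dt^n}(\{1\}^n * X)(t) = \frac{d^n}{dt^n}(h_n * X)(t)$. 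Since $h_n * X = I^n X$ is the ordinary $n$-fold iterated integral of $X$, differentiating $n$ times returns $X(t)$. The hypothesis $X \in C_{-1}(0,\infty)$ guarantees that $h_n * X$ is absolutely convergent and enjoys the $n$-fold differentiability needed to close this step.

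For the Caputo-type identity \eqref{FTCn}, I would exploit the structural assumption $X = I^{t,n}_{(K)}[\tau] Y(\tau)$ with $Y \in C_{-1}(0,\infty)$, which precisely encodes the vanishing boundary conditions at $t=0$ required to reconcile the Caputo-type derivative with its Riemann--Liouville counterpart. Two natural routes finish the argument. The first is to rewrite $D^{t,n,*}_{(K)} I^{t,n}_{(M)} X = \bigl( \nu * (M * X)^{(n)} \bigr)(t)$, use $M = \{1\}^{n-1} * \mu$ together with the identity $\tfrac{d^n}{d\tau^n}\bigl( \{1\}^{n-1} * g \bigr)(\tau) = g'(\tau)$ to reduce it to $(\nu * (\mu * X)')(t)$, and then push the derivative outside the convolution before invoking the Sonin condition as in Part~1. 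The second is to verify directly that $D^{t,n,*}_{(K)} F$ agrees with $D^{t,n}_{(K)} F$ for $F = I^{t,n}_{(M)} X$ on this subclass, thereby reducing Part~2 to Part~1.

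The main obstacle is Part~2. Controlling the boundary terms arising in the integration-by-parts identity that relates $(\nu * F^{(n)})(t)$ to $\tfrac{d^n}{dt^n}(\nu * F)(t)$ demands that the weak regularity $\nu \in C_{-1,0}$ cooperate with vanishing of $F$ and its derivatives at $t=0$. The class $C_{-1,(K)}(0,\infty)$, defined as the image of $I^{t,n}_{(K)}$, is engineered so that precisely these boundary contributions vanish; verifying this carefully is the technical core, and I would lean on the convolution--derivative estimates established by Luchko in \cite{Luchko2021-2} to handle the bookkeeping.
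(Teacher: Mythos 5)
Your argument is essentially correct, but note that the paper itself contains no proof of this statement: Theorems \ref{Theorem-7} and \ref{Theorem-8} are imported results, with the paper explicitly deferring to Luchko \cite{Luchko2021-2}. Your Part 1 reproduces Luchko's actual proof verbatim in spirit: associativity and commutativity of the Laplace convolution on $C_{-1}(0,\infty)$ collapse $K*M*X = \nu*\{1\}^{n-1}*\mu*X$ to $h_n * X$, and $\tfrac{d^n}{dt^n}(h_n*X)=X$ for $t>0$. For Part 2, the routes you sketch do work, but you make the step harder than it needs to be: the "technical core" of controlling boundary terms in the Caputo-versus-Riemann--Liouville comparison can be bypassed entirely. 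Writing $X = I^{t,n}_{(K)}[\tau]Y(\tau) = (K*Y)(t)$ with $Y\in C_{-1}(0,\infty)$, one gets
\begin{equation*}
I^{t,n}_{(M)}[\tau]X(\tau) = (M*K*Y)(t) = (h_n*Y)(t),
\end{equation*}
which is the ordinary $n$-fold integral of $Y$; hence its derivatives of order $0,\dots,n-1$ vanish at the origin, its $n$-th derivative equals $Y$, and therefore $D^{t,n,*}_{(K)}$ applied to it gives $(K*Y)(t)=X(t)$ directly, with no integration by parts and no appeal to auxiliary convolution--derivative estimates. Your first route for Part 2 also has a small hidden dependence you should make explicit: the identity $\tfrac{d^n}{d\tau^n}(\{1\}^{n-1}*g)(\tau)=g'(\tau)$ and the subsequent interchange of derivative and convolution require $g=\mu*X$ to be differentiable with $g(0)=0$, which is \emph{not} guaranteed for general $X\in C_{-1}(0,\infty)$ and only follows after substituting $X=K*Y$, so the representation must be invoked at the start of the computation rather than at the end.
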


\begin{Theorem}[Second Fundamental Theorem for GFC of arbitrary order] \label{Theorem-8}

Let $(M(t),\ K(t))$ be a kernel pair from the Luchko set $\mathcal{L}_{n,0}$. 

If $X(t) \in C_{-1}^n(0,\infty)$, then
\begin{equation} \label{2FTCn}
I^{t,n}_{(M)}[\tau] \, D^{\tau,n,*}_{(K)}[s] \, X(s) = 
X(t) - \sum^{n-1}_{k=0} X^{(k)}(0)\, h_{k+1}(t) .
\end{equation}

If $X(t) \in C^n_{-1,(M)} (0,\infty)$, then
\begin{equation} \label{2FTLn}
I^{t,n}_{(M)}[\tau] \, D^{\tau,n}_{(K)}[s] \, X(s) = X(t) ,
\end{equation}
where function $X(t)$ belongs to $C^n_{-1,M}(0,\infty)$, if it can be represented in the form $X(t) = I^{t,n}_{(M)} [\tau] \, Y(\tau) \in C^{-1}(0,\infty)$, where $Y(t) \in C_{-1}(0,\infty)$.
\end{Theorem}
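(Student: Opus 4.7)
The plan is to reduce both statements to associativity of Laplace convolution together with the Luchko condition $(M * K)(t) = h_n(t)$, and to invoke the First Fundamental Theorem (Theorem 7) for the Riemann--Liouville-type identity. The key intermediate fact is the classical Taylor-type formula with integral remainder
\begin{equation*}
(h_n * X^{(n)})(t) = X(t) - \sum_{k=0}^{n-1} X^{(k)}(0)\, h_{k+1}(t),
\end{equation*}
valid for $X \in C^n_{-1}(0,\infty)$. I would establish this by induction on $n$: the base case $n=1$ is the fundamental theorem of calculus, $(h_1 * X')(t) = \int_0^t X'(\tau)\,d\tau = X(t) - X(0)$, and the inductive step follows from $h_n = h_{n-1} * h_1$ together with an integration by parts that shifts one derivative off $X^{(n)}$ and produces the next term in the Taylor sum, using $h_k(0)$ vanishing for $k \ge 2$.

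For the Caputo-type identity \eqref{2FTCn}, I would expand using the definitions:
\begin{equation*}
I^{t,n}_{(M)}[\tau]\, D^{\tau,n,*}_{(K)}[s]\, X(s) = \bigl(M * (K * X^{(n)})\bigr)(t) = \bigl((M * K) * X^{(n)}\bigr)(t),
\end{equation*}
where associativity of convolution is justified because $X^{(n)} \in C_{-1}(0,\infty)$ and $M, K$ lie in the appropriate spaces so that all iterated convolutions make sense. The Luchko condition $M * K = h_n$ then immediately converts this to $(h_n * X^{(n)})(t)$, and the Taylor formula above finishes the computation. For the Riemann--Liouville-type identity \eqref{2FTLn}, I would use the defining hypothesis of $C^n_{-1,(M)}(0,\infty)$: write $X(t) = I^{t,n}_{(M)}[\tau]\, Y(\tau)$ for some $Y \in C_{-1}(0,\infty)$, apply Theorem 7 (first fundamental theorem, R--L form) to get $D^{t,n}_{(K)}[\tau]\, X(\tau) = Y(t)$, and then conclude $I^{t,n}_{(M)}[\tau]\, D^{\tau,n}_{(K)}[s]\, X(s) = I^{t,n}_{(M)}[\tau]\, Y(\tau) = X(t)$.

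The main obstacle I anticipate is not the algebraic manipulation, which is short, but the analytic bookkeeping that makes it rigorous: verifying that each convolution is well-defined in the Luchko--Sonin spaces $C_{-1}(0,\infty)$ and $C_{-1,0}(0,\infty)$, that associativity applies to the triple convolution $M * K * X^{(n)}$ in this setting, and that the Taylor remainder formula genuinely holds when $X^{(n)}$ is only assumed to belong to $C_{-1}(0,\infty)$ (with a mild integrable singularity at the origin) rather than to a space of continuous functions on $[0,\infty)$. Once these are checked, the inductive integration by parts goes through because the boundary contributions at $\tau = 0$ from $h_k(t-\tau)$ are controlled by the $C_{-1}$ growth condition, and the argument for \eqref{2FTCn} and \eqref{2FTLn} reduces to a few lines each.
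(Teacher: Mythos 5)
The paper does not actually prove this theorem: it is quoted from the cited reference \cite{Luchko2021-2}, with the explicit remark that ``Theorems \ref{Theorem-7} and \ref{Theorem-8} are proved by Yu.~Luchko in \cite{Luchko2021-2}.'' Your proposal is essentially the argument given in that reference --- for \eqref{2FTCn}, associativity of the Laplace convolution plus the Luchko condition $(M*K)(t)=h_n(t)$ reduces everything to the $n$-fold-integral Taylor identity $(h_n * X^{(n)})(t) = X(t) - \sum_{k=0}^{n-1}X^{(k)}(0)h_{k+1}(t)$, and for \eqref{2FTLn} one writes $X = I^{t,n}_{(M)}[\tau]Y(\tau)$ and applies the first fundamental theorem --- so your approach is correct and matches the source; the only analytic points you rightly flag (well-definedness and associativity of convolutions in $C_{-1}(0,\infty)$, and existence of the values $X^{(k)}(0)$ for $X\in C^n_{-1}(0,\infty)$) are exactly the lemmas Luchko establishes before this theorem.
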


Theorems \ref{Theorem-7} and \ref{Theorem-8} are proved by Yu. Luchko in \cite{Luchko2021-2}.


\subsection{General Momenta of Arbitrary Orders}

In fractional dynamics, to describe nonlocal mappings of arbitrary order, we should define generalized momenta \cite{Springer2010}.
For generalized fractional dynamics, we also should use the generalized momenta. We will use the following definitions.

\begin{Definition} \label{GFMomenta}
Let $(M(t),\ K(t))$ be a pair of the kernels from the Luchko set $\mathcal{L}_{n,0}$ in the form 
\begin{equation} \label{Luchko-GFC2-new}
M(t) \, = \, (\{1\}^{n-1}\, *\, \mu)(t), \quad 
K (t) \, = \, \nu(t) .
\end{equation}
Then the generalized momentum $V_k(t)$ is defined as
\begin{equation} \label{P-GFI}
V_k(t) \, := \, 
I^{t,k}_{(h)} [\tau]\, X(\tau) \, = \,
\int^t_0 d\tau \, h_{k}(t-\tau) \, X(\tau), 
\end{equation}
where $X(t) \in C_{-1}(0,\infty)$.
The generalized momenta $P_k(t)$, $P^{*}_k(t)$ are defined by the equations
\begin{equation} \label{P-GFD-1} 
P_k(t) \, := \,
D^{t,k}_{(K)}[\tau] X(\tau) \, = \, 
\frac{d^{k}}{dt^{k}} \, \int^t_0 d\tau \, 
K(t-\tau) \, X(\tau) ,
\end{equation}
\begin{equation} \label{P-GFD-2} 
P^{*}_k(t) \, := \, \frac{d^k}{dt^k} X(t) ,
\end{equation}
where $k=1, \dots ,n-1$, $t>0$ and
\begin{equation} 
h_{n}(t) = \frac{t^{n-1}}{(n-1)!} .
\end{equation}
To simplify, we can denine
\begin{equation} \label{P-GFD-0} 
V_0(t) \, := \, X(t) , \quad
P_0(t) \, := \, X(t) , \quad
P^{*}_0(t) \, := \, X(t) , 
\end{equation}
where $X(t) \in C^{k}_{-1}(0,\infty)$.
\end{Definition}

Let us prove the properties of the generalized momenta that describes connections of GFI and GFDs of the variable $X(t)$ with the momenta $V_k(t)$, $P_k(t)$, and $P^{*}_k(t)$.

\begin{Theorem} \label{P-Theorem}

Let $V_k(t)$ be defined by \eqref{P-GFI}. 
Then the equation
\begin{equation} \label{V-GFI-Theorem}
I^{t,n}_{(M)} [\tau]\, X(\tau) \, = \,
I^{t,n-k}_{(M)} [\tau]\, V_k(\tau) ,
\end{equation}
holds, if $X(t) \in C_{-1}(0,\infty)$,
and $V_k(t)\in C_{-1}(0,\infty)$.

Let $P_k(t)$, and $P^{*}_k(t)$ be defined by 
\eqref{P-GFD-1} and \eqref{P-GFD-2}, respectively. 
Then equalities
\begin{equation} \label{P-GFD-1-Theorem} 
D^{t,n}_{(K)}[\tau] X(\tau) \, = \, 
\frac{d^{n-k}}{dt^{n-k}} P_k(t) ,
\end{equation}
\begin{equation} \label{P-GFD-2-Theorem} 
D^{t,n,*}_{(K)}[\tau] X(\tau) \, = \, 
D^{t,n-k,*}_{(K)}[\tau] P^{*}_k(\tau) ,
\end{equation}
are satisfied, if $X(t) \in C_{-1}(0,\infty)$, 
and $P_k(t), \, P^{*}_k(t) \in C^{k}_{-1}(0,\infty)$.
\end{Theorem}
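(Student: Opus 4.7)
The plan is to reduce all three identities to the associativity and commutativity of the Laplace convolution together with the elementary identity $h_{k}(t)=\{1\}^{k}(t)=t^{k-1}/(k-1)!$. Each formula is really an algebraic rearrangement of an operator composition; no deep analytic argument is needed beyond checking that the hypotheses place every intermediate function in $C_{-1}(0,\infty)$, so that the convolutions make sense and associativity is applicable.

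For \eqref{V-GFI-Theorem}, I would invoke the factorization $M(t)=(\{1\}^{n-1}*\mu)(t)$ supplied by \eqref{Luchko-GFC2-new}, reading the notation $I^{t,n-k}_{(M)}$ as the general fractional integral of order $n-k$, whose kernel is $M_{n-k}(t)=(\{1\}^{n-k-1}*\mu)(t)$. Using $V_{k}=h_{k}*X=\{1\}^{k}*X$ and associativity/commutativity of $*$,
\begin{equation*}
I^{t,n}_{(M)}[\tau]X(\tau)=(\{1\}^{n-1}*\mu*X)(t)=\bigl((\{1\}^{n-k-1}*\mu)*(\{1\}^{k}*X)\bigr)(t)=I^{t,n-k}_{(M)}[\tau]V_{k}(\tau),
\end{equation*}
which is the claim.

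For \eqref{P-GFD-1-Theorem}, I note that the Riemann--Liouville-type GFD at any order uses the same kernel $K=\nu$, and I would simply split the classical derivative:
\begin{equation*}
D^{t,n}_{(K)}[\tau]X(\tau)=\frac{d^{n}}{dt^{n}}(K*X)(t)=\frac{d^{n-k}}{dt^{n-k}}\Bigl(\frac{d^{k}}{dt^{k}}(K*X)(t)\Bigr)=\frac{d^{n-k}}{dt^{n-k}}P_{k}(t).
\end{equation*}
For the Caputo-type identity \eqref{P-GFD-2-Theorem}, I would argue
\begin{equation*}
D^{t,n,*}_{(K)}[\tau]X(\tau)=(K*X^{(n)})(t)=(K*(X^{(k)})^{(n-k)})(t)=(K*(P^{*}_{k})^{(n-k)})(t)=D^{t,n-k,*}_{(K)}[\tau]P^{*}_{k}(\tau),
\end{equation*}
using $P^{*}_{k}=X^{(k)}$ and the composition of ordinary derivatives.

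The main obstacle is essentially notational rather than substantive: one must distinguish between the kernel $M$, which carries the order in its very definition via the factor $\{1\}^{n-1}$ (so that reducing the order $n$ to $n-k$ genuinely changes the kernel to $M_{n-k}$), and the kernel $K=\nu$, which is the same at every order. Once this is settled, all three assertions are a few lines of convolution algebra; the hypotheses $V_{k},P_{k},P^{*}_{k}\in C_{-1}(0,\infty)$ (respectively $C^{k}_{-1}(0,\infty)$) are precisely what is needed to justify the associativity of $*$ and the commutation of $*$ with $d/dt$ in the necessary places.
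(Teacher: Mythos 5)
Your proposal is correct and follows essentially the same route as the paper: part (A) is proved there by writing $I^{t,n}_{(M)}[\tau]X(\tau)=(h_{n-1}*\mu*X)(t)$ and using associativity of the convolution together with the semigroup property of the kernels $h_k$ (equivalently $\{1\}^{n-k-1}*\{1\}^{k}=\{1\}^{n-1}$), while parts (B) are proved by splitting $d^n/dt^n$ into $d^{n-k}/dt^{n-k}\circ d^k/dt^k$ and by $(X^{(k)})^{(n-k)}=X^{(n)}$, exactly as you do. Your remark that $M$ carries the order through the factor $\{1\}^{n-1}$ whereas $K=\nu$ is order-independent is precisely the point the paper's computation relies on.
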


\begin{proof} 

(A) Let us first prove equation \eqref{V-GFI-Theorem} for the GFI.
We can use the fact that the standard integration of order $n \in \mathbb{N}$ ($n$-fold integral) has the form \cite[p.33]{FC1}
\begin{equation}
\int^t_0 d\tau_1 \int^{\tau_1}_0 d\tau_2 \dots 
\int^{\tau_{n-1}}_0 d\tau_n \, X (\tau_n) =
\int^{t}_0 d\tau \, h_{n} (t-\tau) \, X (\tau) = 
I^{t,n}_{(h)} [\tau] X(\tau) ,
\end{equation}
where
\begin{equation}
h_{n} (t-\tau) \, = \, \frac{(t-\tau)^{n-1}}{(n-1)!} .
\end{equation}
Therefore, the GFI that is given by \eqref{L-def-GFI} with $n>1$ is 
\[
I^{t,n}_{(M)} [\tau]\, X(\tau) \, = \,
\int^t_0 d \tau \, ( h_{n-1} \, * \, \mu)(t-\tau) \, X(\tau)
 \, = \, 
\]
\[
(( h_{n-1} \, * \, \mu )\, * \, X)(t) \, = \, 
( h_{n-1} \, * \, (\mu \, * \, X))(t) \, = \, 
\]
\begin{equation} \label{L-def-GFI-new}
\int^{t}_0 d\tau \, h_{n-1} (t-\tau) \, (\mu \, * \, X)(\tau)
\, = \, 
I^{t,n}_{(h)} [\tau] \, (\mu \, * \, X)(\tau) .
\end{equation}
Using that 
\begin{equation}
I^{t,n}_{(h)} [\tau] X(\tau)\, = \,
I^{t,n-k}_{(h)} [\tau] \, I^{\tau,k}_{h} [s] X(s) ,
\end{equation}
we get
\begin{equation}
I^{t,n-k}_{(M)} [\tau] \, V_k (\tau) \, = \, 
I^{t,n-k}_{(M)} [\tau] \, 
I^{\tau,k}_{(h)} [s]\, X(s) \, = \,
I^{t,n}_{(M)} [\tau]\, X(\tau) .
\end{equation}

(B) Let us prove equation \eqref{P-GFD-1-Theorem} by using definitions \eqref{P-GFD-1} and \eqref{L-def-GFD-1}. 
Equation \eqref{P-GFD-1-Theorem} is proved by the following transformations 
\[ 
\frac{d^{n-k}}{dt^{n-k}} P_k(t) \, = \,
\frac{d^{n-k}}{dt^{n-k}}
D^{t,k}_{(K)}[\tau] X(\tau) \, = \,
\frac{d^{n-k}}{dt^{n-k}} 
\frac{d^{k}}{dt^{k}} \, \int^t_0 d\tau \, 
K(t-\tau) \, X(\tau) \, =
\]
\begin{equation} \label{Proof-RL-1} 
\frac{d^{n}}{dt^{n}} \, \int^t_0 d\tau \, 
K(t-\tau) \, X(\tau) \, = \, 
D^{t,n}_{(K)}[\tau] X(\tau) . 
\end{equation}

Let us prove equation \eqref{P-GFD-2-Theorem} by using definitions \eqref{P-GFD-2} and \eqref{L-def-GFD-2}. 
Equation \eqref{P-GFD-2-Theorem} is proved by the transformations
\[
D^{t,n-k,*}_{(K)}[\tau] P^{*}_k(t) \, = \,
\int^t_0 d\tau \, K(t-\tau) \, (P^{*}_k)^{(n-k)} (\tau) \, =
\]
\begin{equation} \label{Proof-RL-2} 
\int^t_0 d\tau \, K(t-\tau) \, (X^{(k)})^{(n-k)} (\tau) \, = \,
\int^t_0 d\tau \, K(t-\tau) \, X^{(n)} (\tau) \, =
D^{t,n}_{(K)}[\tau] X(\tau) . 
\end{equation}

\end{proof}


Theorem \ref{P-Theorem} allows us to derive exact solutions of equations with GFI and GFDs of arbitrary orders and periodic kicks for $V_k(t)$, $P_k(t)$, $P^{*}_{k}(t)$ with $k=1, \dots , n-1$ and the variable $X(t)$.

\begin{Theorem} \label{Theorem-10}
Let the functions $K(t)$, $M(t)$ be a Luchko pair of kernels from ${\cal L}_{n,0}$, and $X(t) \in C_{-1}(0,\infty)$.
Then the equations
\be \label{EQ-Th10-1a}
I^{t,n}_{(M)} [\tau] \, X(\tau) = \lambda \, {\cal G}(t,X(t)) 
\sum^{\infty}_{j=1} \delta ((t+\epsilon)/T-j) ,
\ee
\be \label{EQ-Th10-1b}
I^{t,n-k}_{(M)} [\tau] \, V_k(\tau) = \lambda \, {\cal G}(t,X(t)) 
\sum^{\infty}_{j=1} \delta ((t+\epsilon)/T-j) ,
\ee
where $k=1, \dots , n-1$, have the solutions
\be \label{EQ-Th10-2a}
X(t) \, = \, \lambda \, T \, \sum^{m}_{j=1} 
K^{(n)}(t-Tj+\epsilon)) \, {\cal G}(Tj-\epsilon,X(Tj-\epsilon)) ,
\ee
\be \label{EQ-Th10-2b}
V_k(t) \, = \, \lambda \, T \, \sum^{m}_{j=1} 
K^{(n-k)}(t-Tj+\epsilon)) \, {\cal G}(Tj-\epsilon,X(Tj-\epsilon)) ,
\ee
if $Tm<t<T(m+1)$.
For the time points $s = Tj-\epsilon$ at $\epsilon \to 0+$, and the variables 
\begin{equation} \label{EQ-Th10-XVk}
X_{j}=\lim_{\epsilon \rightarrow 0+} X(Tj-\epsilon), \quad
V_{k,j}=\lim_{\epsilon \rightarrow 0+} V_k(Tj-\epsilon),
\quad (j=1,...,m+1) ,
\end{equation} 
solutions \eqref{EQ-Th10-2a}, \eqref{EQ-Th10-2b} are represented by the general nonlocal mappings
\be \label{EQ-Th10-3a}
X_{m+1} = X_m \,+ \, 
\lambda \, T \, K^{(n)}(T) \, {\cal G}(Tm,X_m) +
\lambda \, T \, \sum^{m-1}_{j=1} 
\Omega^{(n)}_{(K)} (T,m-j) \, {\cal G}(jT,X_j) ,
\ee
\be \label{EQ-Th10-3b}
V_{k,m+1} = V_{k,m} \,+ \, 
\lambda \, T \, K^{(n-k)}(T) \, {\cal G}(Tm,X_m) +
\lambda \, T \, \sum^{m-1}_{j=1} 
\Omega^{(n-k)}_{(K)} (T,m-j) \, {\cal G}(jT,X_j) ,
\ee
where $K(t) \in C_{-1,0}(0,\infty)$ is a function that is associated kernel to the kernel $M(t)$, and
\be \label{Omega-10}
\Omega_{(K)} (T,z) \, := \,
K(T(z+1) ) \, - \, K( Tz ) , \quad
\Omega^{(k)}_{(K)} (T,z) \, := \, 
\frac{1}{T^k} \frac{d^k}{dz^k} \Omega_{(K)} (T,z) .
\ee
 \end{Theorem}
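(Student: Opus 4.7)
The plan is to adapt the strategies from the proofs of Theorems \ref{Theorem-1}, \ref{Theorem-3}, and \ref{Theorem-5} to the arbitrary-order setting of Section 5, leveraging Theorems \ref{Theorem-7} and \ref{P-Theorem}. For equation \eqref{EQ-Th10-1a}, I would apply the Riemann--Liouville-type GFD $D^{s,n}_{(K)}$ of order $n$ to both sides; by equation \eqref{FTLn} of the first fundamental theorem the left-hand side collapses to $X(s)$, and expanding $D^{s,n}_{(K)}$ through its definition \eqref{L-def-GFD-1} gives
\be
X(s) \, = \, \lambda \, \frac{d^n}{ds^n}\int_0^s K(s-t)\,{\cal G}(t,X(t))\sum_{j=1}^{\infty}\delta((t+\epsilon)/T-j)\,dt.
\ee
For $Tm<s<T(m+1)$ the sum truncates to $j=1,\ldots,m$, and the delta identity \eqref{Proof-delta} together with the scaling $\delta((t+\epsilon)/T-j)=T\,\delta(t-Tj+\epsilon)$ evaluates each integral at $t=Tj-\epsilon$. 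Since the resulting ${\cal G}$-factors are independent of $s$, the $n$ outer derivatives fall directly onto $K$ and produce $K^{(n)}$, yielding \eqref{EQ-Th10-2a}.

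For the companion equation \eqref{EQ-Th10-1b}, I would replay exactly the same argument with $n$ replaced by $n-k$: one verifies that the Sonin pair $(\mu,\nu)$ underlying $(M,K)\in\mathcal{L}_{n,0}$ generates a compatible pair in $\mathcal{L}_{n-k,0}$, since $\nu * \{1\}^{n-k-1} * \mu = \{1\}^{n-k} = h_{n-k}$, so that the reduced-order first fundamental identity $D^{s,n-k}_{(K)}I^{s,n-k}_{(M)}V_k=V_k$ still holds. Repeating the delta evaluation gives \eqref{EQ-Th10-2b}. An equivalent route would be to invoke Theorem \ref{P-Theorem}, equation \eqref{V-GFI-Theorem}, to equate the two left-hand sides and then recover $V_k$ directly from the known form of $X$.

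The nonlocal mappings \eqref{EQ-Th10-3a} and \eqref{EQ-Th10-3b} then follow by evaluating \eqref{EQ-Th10-2a} and \eqref{EQ-Th10-2b} at $s=T(m+1)-\epsilon$ and $s=Tm-\epsilon$, passing to the limit $\epsilon\to 0^+$ using the variables \eqref{EQ-Th10-XVk}, and subtracting term by term. The isolated $j=m$ contribution of the upper sum yields $\lambda T\,K^{(n)}(T)\,{\cal G}(Tm,X_m)$ (respectively $\lambda T\,K^{(n-k)}(T)\,{\cal G}(Tm,X_m)$), and the pairwise differences $K^{(n)}(T(m+1-j))-K^{(n)}(T(m-j))$ for $j=1,\ldots,m-1$ reassemble into $\Omega^{(n)}_{(K)}(T,m-j)$ via the definition \eqref{Omega-10}, after observing that $T^{-k}\,d^k/dz^k\,K(T(z+1))=K^{(k)}(T(z+1))$ by the chain rule.

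The principal obstacle is the clean justification of the first fundamental theorem at the reduced order $n-k$, since Theorem \ref{Theorem-7} is stated specifically for pairs in $\mathcal{L}_{n,0}$; one must pin down which operator is meant by $I^{t,n-k}_{(M)}$ (namely the GFI whose kernel is $\{1\}^{n-k-1}*\mu$, as implicit in Theorem \ref{P-Theorem}) and verify the cancellation in the distributional setting where the integrand carries Dirac deltas. A secondary subtlety, already latent in the earlier proofs but more visible here, is the commutation of $d^n/ds^n$ past the finite sum against $K$, which is legitimate because the sum is finite and $K$ is smooth on $(0,\infty)$.
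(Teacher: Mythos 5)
Your proposal is correct and follows essentially the same route as the paper: apply $D^{s,n-k}_{(K)}$ (with $k=0$ giving the $X$-equation), invoke the first fundamental theorem to collapse the left-hand side, truncate the kick sum on $Tm<s<T(m+1)$, evaluate the deltas, let the $n-k$ outer derivatives fall onto $K$, and obtain the maps by evaluating at $s=T(m+1)-\epsilon$ and $s=Tm-\epsilon$ and subtracting. In fact you explicitly justify the reduced-order cancellation $D^{s,n-k}_{(K)}I^{t,n-k}_{(M)}V_k=V_k$ via the convolution identity $\nu*\{1\}^{n-k-1}*\mu=\{1\}^{n-k}$, a point the paper's proof uses without comment.
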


\begin{proof} 
In this proof, we use the first fundamental theorem for GFC of arbitrary order that is proved in \cite{Luchko2021-2}, which 
states that if $(M(t),\ K(t))$ is a kernel pair from the Luchko set $\mathcal{L}_{n,0}$ and $X(t) \in C_{-1}(0,\infty)$, then
\begin{equation} \label{Second-FT-AO}
D^{t,n}_{(K)} [\tau] \, I^{\tau,n}_{(M)} [s] \, X(s) = X(t) 
\end{equation}
holds for $t>0$.

Applying the GFD $D^{s,n-k}_{(K)} [t]$ with the kernal $K(t)$, which is associated to the kernel $M(t)$ of the GFI $I^{t,n}_{(M)}$,
to equation \eqref{EQ-Th10-1a} and \eqref{EQ-Th10-1b}, we derive
\be \label{Proof-Th10-1}
D^{s,n-k}_{(K)} [t]
I^{t,n-k}_{(M)} [\tau] \, V_k(\tau) = \lambda \, 
D^{s,n-k}_{(K)} [t] \, {\cal G}(t,X(t)) 
\sum^{\infty}_{j=1} \delta ((t+\epsilon)/T-j) ,
\ee
where $k=1, \dots , n-1$, and $V_k(\tau):=X(\tau)$ for $k=0$.
Using \eqref{Second-FT-AO}, equation \eqref{Proof-Th10-1} is written as
\be \label{Proof-Th10-2}
V_k(s) = \lambda \, 
D^{s,n-k}_{(K)} [t] \, {\cal G}(t,X(t)) 
\sum^{\infty}_{j=1} \delta ((t+\epsilon)/T-j) ,
\ee
Using definition of $D^{s,n-k}_{(K)} [t]$, we get
\be \label{Proof-Th10-3}
V_k(s) = \lambda \, 
\frac{d^{n-k}}{ds^{n-k}} \int^s_0 dt \, K(s-t) \, 
 {\cal G}(t,X(t)) \sum^{\infty}_{j=1} \delta ((t+\epsilon)/T-j) ,
\ee
For $Tm<s<T(m+1)$, equation \eqref{Proof-Th10-3} is 
\be \label{Proof-Th10-5a}
V_k(s) = \lambda \, \sum^{m}_{j=1}
\frac{d^{n-k}}{ds^{n-k}} \int^s_0 dt \, K(s-t) \, 
 {\cal G}(t,X(t)) \delta ((t+\epsilon)/T-j) ,
\ee
Using equality \eqref{Proof-delta}, equation \eqref{Proof-Th10-5a} gives
\be \label{Proof-Th10-5b}
V_k(s) = \lambda \, T \, \sum^{m}_{j=1}
\frac{d^{n-k}}{ds^{n-k}} \, K(s-(Tj-\epsilon)) \, 
 {\cal G}(Tj-\epsilon,X(Tj-\epsilon)) ,
\ee
that can be writen as
\be \label{Proof-Th10-5c}
V_k(s) = \lambda \, T \, \sum^{m}_{j=1}
K^{(n-k)}(s-(Tj-\epsilon)) \, 
 {\cal G}(Tj-\epsilon,X(Tj-\epsilon)) ,
\ee
where $K^{(k)}(z)= d^{(k)} K(z)/dz^{k}$. 
Equation \eqref{Proof-Th10-5c} gives solutions
\eqref{EQ-Th10-2a} and \eqref{EQ-Th10-2b}.

Using \eqref{Proof-Th10-5c} for $s= T(m+1)-\epsilon$ and
$s= Tm-\epsilon$, we obtain
\be \label{Proof-Th10-6a}
V_k(T(m+1)-\epsilon) = \lambda \, T \, \sum^{m}_{j=1}
K^{(n-k)}(T(m+1)-Tj) \, {\cal G}(Tj-\epsilon,X(Tj-\epsilon)) ,
\ee
\be \label{Proof-Th10-6b}
V_k(Tm-\epsilon) = \lambda \, T \, \sum^{m-1}_{j=1}
K^{(n-k)}(Tm-Tj) \, {\cal G}(Tj-\epsilon,X(Tj-\epsilon)) ,
\ee
Using variables \eqref{EQ-Th10-XVk}, solutions \eqref{Proof-Th10-6a} and \eqref{Proof-Th10-6b} at the limit $\epsilon \to 0+$ give
\be \label{Proof-Th10-7a}
V_{k,m+1} = \lambda \, T \, \sum^{m}_{j=1}
K^{(n-k)}(T(m+1-j)) \, {\cal G}(Tj,X_j) ,
\ee
\be \label{Proof-Th10-7b}
V_{k,m} = \lambda \, T \, \sum^{m-1}_{j=1}
K^{(n-k)}(T(m-j)) \, {\cal G}(Tj,X_j) .
\ee
Subtracting equation \eqref{Proof-Th10-7b}  from equation \eqref{Proof-Th10-7a}, we obtain 
\[
V_{k,m+1} - V_{k,m} = 
\lambda \, T \, K^{(n-k)}(T) \, {\cal G}(Tm,X_m) \, + \, 
\]
\be \label{Proof-Th10-8}
\lambda \, T \, \sum^{m-1}_{j=1}
\Bigl( K^{(n-k)}(T(m+1-j)) - K^{(n-k)}(T(m-j)) \Bigr)
\, {\cal G}(Tj,X_j) .
\ee
where $k=0,1,\dots , n-1$.
Then using \eqref{Omega-10}, equation \eqref{Proof-Th10-8}, takes form \eqref{EQ-Th10-3a}, \eqref{EQ-Th10-3b}.

\end{proof}


Let us derive general nonlocal maps from the equations with the GFD $D^{t,n}_{(K)}$ of arbitrary orders and periodic kicks.

\begin{Theorem} \label{Theorem-11}
Let $K(t)$, $M(t)$ be a pair of kernels from the Luchko set ${\cal L}_{n,0}$, and $X(t) \in C^m_{-1}(0,\infty)$.
Then the equations
\be \label{EQ-Th11-1a}
D^{t,n}_{(K)} [\tau] \, X(\tau) = \lambda \, {\cal G}(t,X(t)) 
\sum^{\infty}_{j=1} \delta ((t+\epsilon)/T-j) ,
\ee
\be \label{EQ-Th11-1b}
\frac{d^{n-k}}{dt^{n-k}} \, P_k(\tau) = \lambda \, {\cal G}(t,X(t)) 
\sum^{\infty}_{j=1} \delta ((t+\epsilon)/T-j) ,
\ee
where $k=1, \dots , n-1$, have the solutions
\be \label{EQ-Th11-2a}
X(t) \, = \, \lambda \, T \, \sum^{m}_{j=1} 
M(t-Tj+\epsilon) \, {\cal G}(Tj-\epsilon,X(Tj-\epsilon)) ,
\ee
\be \label{EQ-Th11-2b}
P_k(t) \, = 
\sum^{n-k-1}_{j=0} P^{(j)}_k(0) \, h_{j+1} (t) \, + \, 
\lambda \, T \, \sum^{m}_{j=1} 
h_{n-k}(t-Tj+\epsilon) \, {\cal G}(Tj-\epsilon,X(Tj-\epsilon)) ,
\ee
if $Tm<t<T(m+1)$.
For the time points $s = Tj-\epsilon$ at $\epsilon \to 0+$, and 
the variables 
\begin{equation} \label{EQ-Th11-XPk}
X_{j}=\lim_{\epsilon \rightarrow 0+} X(Tj-\epsilon), \quad
P_{k,j}=\lim_{\epsilon \rightarrow 0+} P_k(Tj-\epsilon),
\quad (j=1,...,m+1) ,
\end{equation} 
solutions of equations \eqref{EQ-Th11-1a}, \eqref{EQ-Th11-1b} are represented by the general nonlocal mappings
\be \label{EQ-Th11-3a}
X_{m+1} = X_m \,+ \, \lambda \, T \, M(T) \, {\cal G}(Tm,X_m) +
\lambda \, T \, \sum^{m-1}_{j=1} 
\Omega_{(M)} (T,m-j) \, {\cal G}(jT,X_j) ,
\ee
\[
P_{k,m+1} = P_{k,m} \, + \, \sum^{n-k-1}_{j=0} P^{(j)}_k(0) \, 
\Omega_{(h_{j+1})}(T,m) \, + \, 
\lambda \, T \, h_{n-k}(T) \, {\cal G}(Tm,X_m) \, +
\]
\be \label{EQ-Th1-3b}
\lambda \, T \, \sum^{m-1}_{j=1} 
\Omega_{(h_{n-k})} (T,m-j) \, {\cal G}(jT,X_j) ,
\ee
where $M(t) \in C_{-1}(0,\infty)$ is a function that is associated kernel to the kernel $K(t)$,
the function $\Omega_{(M)}(T,z)$ is defined by \eqref{Omega-10},
and $h_{\alpha}(t) = t^{\alpha-1}/\Gamma(\alpha)$ 
with $\alpha>0$. 
\end{Theorem}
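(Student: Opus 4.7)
The plan is to follow the same two-step strategy used in Theorem~\ref{Theorem-10}: first obtain the continuous-time solution on each interval between kicks by inverting the operator on the left-hand side via the appropriate fundamental theorem, and then pass to the discrete mapping by evaluating at $s=T(m+1)-\epsilon$ and $s=Tm-\epsilon$, taking $\epsilon\to 0+$, and subtracting. Here, however, the roles of $M$ and $K$ are interchanged relative to Theorem~\ref{Theorem-10}, since we are inverting a GFD rather than a GFI, so the relevant tool is the second fundamental theorem for GFC of arbitrary order (Theorem~\ref{Theorem-8}) rather than the first.

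For the $X$-equation \eqref{EQ-Th11-1a} I would apply $I^{s,n}_{(M)}[t]$ of arbitrary order to both sides. By the Riemann--Liouville-type second fundamental theorem \eqref{2FTLn}, the left-hand side collapses to $X(s)$ with no initial-condition polynomial. Expanding the GFI on the right as a Laplace convolution with $M(t)$, and truncating the $\delta$-sum to $1\le j\le m$ on the interval $Tm<s<T(m+1)$, the distributional identity \eqref{Proof-delta} yields exactly one contribution per kick and produces the continuous-time formula \eqref{EQ-Th11-2a}.

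For equation \eqref{EQ-Th11-1b}, the identity \eqref{P-GFD-1-Theorem} shows that it is in fact equivalent to \eqref{EQ-Th11-1a}, but it is more efficient to solve it directly as an ordinary $(n-k)$-fold antidifferentiation problem for $P_k$. Integrating $n-k$ times from $0$ to $t$ and invoking the standard iterated-integral identity converts the $(n-k)$-fold integral of the right-hand side into a single convolution against $h_{n-k}(t-\tau)$; the integration constants assemble into the polynomial $\sum_{j=0}^{n-k-1}P_k^{(j)}(0)\,h_{j+1}(t)$. Evaluating the $\delta$-convolutions via \eqref{Proof-delta} then yields \eqref{EQ-Th11-2b}.

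Finally, for the discrete map I would set $s=T(m+1)-\epsilon$ and $s=Tm-\epsilon$ in \eqref{EQ-Th11-2a} and \eqref{EQ-Th11-2b}, pass to the limit $\epsilon\to 0+$ using the samples \eqref{EQ-Th11-XPk}, and subtract, exactly as in Theorems~\ref{Theorem-2}, \ref{Theorem-3}, \ref{Theorem-6} and \ref{Theorem-10}. Differences of the form $M(T(m+1-j))-M(T(m-j))$ and $h_{n-k}(T(m+1-j))-h_{n-k}(T(m-j))$ repackage into $\Omega_{(M)}(T,m-j)$ and $\Omega_{(h_{n-k})}(T,m-j)$ through the definition \eqref{Omega-10}, giving \eqref{EQ-Th11-3a} and the convolution part of \eqref{EQ-Th1-3b}. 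The only bookkeeping obstacle absent in Theorem~\ref{Theorem-10} is the polynomial of initial data in $P_k$: the differences $P_k^{(j)}(0)\,[h_{j+1}(T(m+1))-h_{j+1}(Tm)]$ do not cancel and must be gathered into $\sum_{j=0}^{n-k-1}P_k^{(j)}(0)\,\Omega_{(h_{j+1})}(T,m)$, which becomes the extra term appearing in \eqref{EQ-Th1-3b}.
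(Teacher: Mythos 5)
Your proposal is correct and follows essentially the same route as the paper: apply $I^{s,n}_{(M)}[t]$ to \eqref{EQ-Th11-1a} and use the Riemann--Liouville-type second fundamental theorem \eqref{2FTLn} to get $X(s)$ with no initial-data polynomial, treat \eqref{EQ-Th11-1b} by $(n-k)$-fold integration against $h_{n-k}$ (which is exactly the paper's $I^{s,n-k}_{(h)}[t]$ step, producing the polynomial $\sum_{j=0}^{n-k-1}P_k^{(j)}(0)h_{j+1}(t)$), then evaluate at $s=T(m+1)-\epsilon$ and $s=Tm-\epsilon$, subtract, and collect differences into the $\Omega$ functions. The only cosmetic difference is that the paper writes the kernel in the first convolution explicitly as $M_n=(h_{n-1}*\mu)$, which coincides with $M$ from the Luchko pair.
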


\begin{proof} In the proof, we use the second fundamental theorem for GFC of arbitrary order that is proved in \cite{Luchko2021-2}.
This theorem states that if $(M(t),\ K(t))$ is a kernel pair from the Luchko set $\mathcal{L}_{n,0}$ and 
$X(t) \in C^n_{-1,(M)} (0,\infty)$, then
\begin{equation} \label{2FTLn-b}
I^{t,n}_{(M)}[\tau] \, D^{\tau,n}_{(K)}[s] \, X(s) = X(t) .
\end{equation}
The function $X(t)$ belongs to $C^n_{-1,(M)}(0,\infty)$, if it can be represented in the form $X(t) = I^{t,n}_{(M)} [\tau] \, Y(\tau) \in C_{-1}(0,\infty)$, where $Y(t) \in C_{-1}(0,\infty)$.

The action of the integral operator $I^{s,n}_{(M)}[t]$ on equation \eqref{EQ-Th11-1a}, and 
the operator $I^{t,n-k}_{(h)}[t]$ on \eqref{EQ-Th11-1b} give
\be \label{Proof-Th11-1a}
I^{s,n}_{(M)}[t] \, D^{t,n}_{(K)} [\tau] \, X(\tau) = 
\lambda \, I^{s,n}_{(M)}[t] \, {\cal G}(t,X(t)) 
\sum^{\infty}_{j=1} \delta ((t+\epsilon)/T-j) ,
\ee
\be \label{Proof-Th11-1b}
I^{s,n-k}_{(h)}[t]
\frac{d^{n-k}}{dt^{n-k}} \, P_k(\tau) = \lambda \, 
I^{s,n-k}_{(h)}[t] \, {\cal G}(t,X(t)) 
\sum^{\infty}_{j=1} \delta ((t+\epsilon)/T-j) ,
\ee
where $k=1, \dots , n-1$, and
\[
I^{s,n-k}_{(h)}[t] X(t) \, = \, 
\int^s_0 dt \, h_{n-k}(s-t) \, X(t) .
\]


Using the second fundamental theorems for $D^{t,n}_{(K)}$ 
in form \eqref{2FTLn-b}, and ${d^{n-k}}/{dt^{n-k}}$, 
equations \eqref{Proof-Th11-1a} and \eqref{Proof-Th11-1b} give
\be \label{Proof-Th11-2a}
X(s) \, = \,
\lambda \, I^{s,n}_{(M)}[t] \, {\cal G}(t,X(t)) 
\sum^{\infty}_{j=1} \delta ((t+\epsilon)/T-j) ,
\ee
\be \label{Proof-Th11-2b}
P_k(s) \, - \, \sum^{n-k-1}_{j=0} P^{(j)}_k(0) \, h_{j+1} (s) \, = \, \lambda \, 
I^{s,n-k}_{(h)}[t] \, {\cal G}(t,X(t)) 
\sum^{\infty}_{j=1} \delta ((t+\epsilon)/T-j) .
\ee
Using definition of $I^{s,n-k}_{(M)}$ and $I^{s,n-k}_{(h)}$, 
equations \eqref{Proof-Th11-2a} and \eqref{Proof-Th11-2b} 
 are written as
\be \label{Proof-Th11-3a}
X(s) \, = \,
\lambda \, \int^s_0 dt \, M_{n}(s-t) \, {\cal G}(t,X(t)) 
\sum^{\infty}_{j=1} \delta ((t+\epsilon)/T-j) ,
\ee
\be \label{Proof-Th11-3b}
P_k(s) \, - \, \sum^{n-k-1}_{j=0} P^{(j)}_k(0) \, h_{j+1} (s) \, = \, \lambda \, 
\int^s_0 dt \, h_{n-k}(s-t) \, {\cal G}(t,X(t)) 
\sum^{\infty}_{j=1} \delta ((t+\epsilon)/T-j) ,
\ee
where
\[
M_n(t) \, = \, (h_{n-1}*\mu)(t) .
\]
For $Tm<s<T(m+1)$, equations \eqref{Proof-Th11-3a} and
\eqref{Proof-Th11-3b} are 
\be \label{Proof-Th11-4a}
X(s) \, = \, \lambda \, \sum^{m}_{j=1} 
\int^s_0 dt \, M_{n}(s-t) \, {\cal G}(t,X(t)) 
\delta ((t+\epsilon)/T-j) ,
\ee
\be \label{Proof-Th11-4b}
P_k(s) \, - \, \sum^{n-k-1}_{j=0} P^{(j)}_k(0) \, h_{j+1} (s) \, 
= \, \lambda \, \sum^{m}_{j=1} 
\int^s_0 dt \, h_{n-k}(s-t) \, {\cal G}(t,X(t)) 
\delta ((t+\epsilon)/T-j) .
\ee
Using equality \eqref{Proof-delta}, equations \eqref{Proof-Th11-4a}, \eqref{Proof-Th11-4b} give
\be \label{Proof-Th11-5a}
X(s) \, = \, \lambda \, T \, \sum^{m}_{j=1} 
M_{n}(s- (Tj-\epsilon)) {\cal G}(Tj-\epsilon,X(Tj-\epsilon)) ,
\ee
\be \label{Proof-Th11-5b}
P_k(s) \, - \, \sum^{n-k-1}_{j=0} P^{(j)}_k(0) \, h_{j+1} (s) \, 
= \, \lambda \, T \, \sum^{m}_{j=1} 
h_{n-k}(s-(Tj-\epsilon)) \, {\cal G}(Tj-\epsilon,X(Tj-\epsilon)) .
\ee
Equations \eqref{Proof-Th11-5a}, \eqref{Proof-Th11-5b} 
give solutions \eqref{EQ-Th11-2a} and \eqref{EQ-Th11-2b}.

Using \eqref{Proof-Th11-5a} and \eqref{Proof-Th11-5b}
for $s= T(m+1)-\epsilon$ and
$s= Tm-\epsilon$, we obtain
\be \label{Proof-Th11-6a}
X(T(m+1)-\epsilon) \, = \, \lambda \, T \, \sum^{m}_{j=1} 
M_{n}(T(m+1)- Tj) \, {\cal G}(Tj-\epsilon,X(Tj-\epsilon)) ,
\ee
\[
P_k(T(m+1)-\epsilon) \, - 
\, \sum^{n-k-1}_{j=0} P^{(j)}_k(0) \, h_{j+1} (T(m+1)-\epsilon) \, = \, 
\]
\be \label{Proof-Th11-6b}
\lambda \, T \, \sum^{m}_{j=1} 
h_{n-k}(T(m+1)-Tj) \, {\cal G}(Tj-\epsilon,X(Tj-\epsilon)) ,
\ee
and
\be \label{Proof-Th11-7a}
X(Tm-\epsilon) \, = \, \lambda \, T \, \sum^{m-1}_{j=1} 
M_{n}(Tm- Tj) \, {\cal G}(Tj-\epsilon,X(Tj-\epsilon)) ,
\ee
\be \label{Proof-Th11-7b}
P_k(Tm-\epsilon) \, - \, \, \sum^{n-k-1}_{j=0} P^{(j)}_k(0) \, h_{j+1} (Tm-\epsilon) \, = \,
\lambda \, T \, \sum^{m-1}_{j=1} 
h_{n-k}(Tm-Tj) \, {\cal G}(Tj-\epsilon,X(Tj-\epsilon)) .
\ee
Using variables \eqref{EQ-Th11-XPk}, solutions \eqref{Proof-Th11-6a} and \eqref{Proof-Th11-6b}, \eqref{Proof-Th11-7a} and \eqref{Proof-Th11-7b}, at the limit $\epsilon \to 0+$ give
\be \label{Proof-Th11-8a}
X_{m+1} \, = \, \lambda \, T \, \sum^{m}_{j=1} 
M_{n}(T(m-j+1)) {\cal G}(Tj,X_j) ,
\ee
\be \label{Proof-Th11-8b}
P_{k,m+1} \, - 
\, \, \sum^{n-k-1}_{j=0} P^{(j)}_k(0) \, h_{j+1} (T(m+1)-\epsilon) \, = \,
\lambda \, T \, \sum^{m}_{j=1} 
h_{n-k}(T(m-j+1)) \, {\cal G}(Tj,X_j) ,
\ee
and
\be \label{Proof-Th11-9a}
X_m \, = \, \lambda \, T \, \sum^{m-1}_{j=1} 
M_{n}(T(m-j)) \, {\cal G}(Tj,X_j) ,
\ee
\be \label{Proof-Th11-9b}
P_{k,m} \, - \, 
\sum^{n-k-1}_{j=0} P^{(j)}_k(0) \, h_{j+1} (Tm) \, = \
\lambda \, T \, \sum^{m-1}_{j=1} 
h_{n-k}(T(m-j)) \, {\cal G}(Tj,X_j) .
\ee
Subtracting from equations \eqref{Proof-Th11-8a}, \eqref{Proof-Th11-8b} equation \eqref{Proof-Th11-9a} and \eqref{Proof-Th11-9b}, we obtain 
\[
X_{m+1} - X_m \, = \lambda \, T \, M_{n}(T) {\cal G}(Tm,X_m) +
\]
\be \label{Proof-Th11-10a}
\, \lambda \, T \, \sum^{m-1}_{j=1} 
\Bigl( M_{n}(T(m-j+1)) \, - \, M_{n}(T(m-j)) \Bigr) \, {\cal G}(Tj,X_j) ,
\ee
\[
P_{k,m+1} \, - \, P_{k,m} \, = \,
\sum^{n-k-1}_{j=0} P^{(j)}_k(0) \, 
\Bigl( h_{j+1} (T(m+1)) \, - \, h_{j+1} (Tm) \Bigr) \, + \,
\]
\be \label{Proof-Th11-10b}
\lambda \, T \, h_{n-k}(T) \, {\cal G}(Tm,X_m) \, + \, 
\lambda \, T \, \sum^{m-1}_{j=1} 
\Bigl( h_{n-k}(T(m-j+1)) \, - \, h_{n-k}(T(m-j)) \Bigr)
\, {\cal G}(Tj,X_j) ,
\ee
where $k=1,\dots , n-1$.
Then using \eqref{Omega-10}, equations \eqref{Proof-Th11-10a} 
\eqref{Proof-Th11-10b}, take form \eqref{EQ-Th11-3a}, \eqref{EQ-Th11-3a}.

\end{proof}


Let us derive general nonlocal maps from the equations with the GFD $D^{t,*,n}_{(K)}$ of arbitrary orders and periodic kicks.

\begin{Theorem} \label{Theorem-12}
Let functions $K(t)$, $M(t)$ be a pair of kernels from the Luchko set ${\cal L}_{n,0}$, and $X(t) \in C^m_{-1}(0,\infty)$.
Then the equations
\be \label{EQ-Th12-1a}
D^{t,*,n}_{(K)} [\tau] \, X(\tau) = \lambda \, {\cal G}(t,X(t)) 
\sum^{\infty}_{j=1} \delta ((t+\epsilon)/T-j) ,
\ee
\be \label{EQ-Th12-1b}
D^{t,*,n-k}_{(K)} [\tau] \, P^{*}_k(\tau) = \lambda \, {\cal G}(t,X(t)) 
\sum^{\infty}_{j=1} \delta ((t+\epsilon)/T-j) ,
\ee
where $k=1, \dots , n-1$, have the solutions
\be \label{EQ-Th12-2a}
X(t) \, = \sum^{n-1}_{j=0} X^{(j)}(0)\, h_{j+1}(t) \, + \, 
\lambda \, T \, \sum^{m}_{j=1} 
M(t-Tj+\epsilon) \, {\cal G}(Tj-\epsilon,X(Tj-\epsilon)) ,
\ee
\be \label{EQ-Th12-2b}
P^{*}_k(t) \, = \, \sum^{n-k-1}_{j=0} P^{(j)}(0)\, h_{j+1}(t) 
\, + \, 
\lambda \, T \, \sum^{m}_{j=1} 
M_{n-k}(t-Tj+\epsilon) \, {\cal G}(Tj-\epsilon,X(Tj-\epsilon)) ,
\ee
if $Tm<t<T(m+1)$, where
\be \label{Mn-k}
M_{n-k}(t) \, := \, (h_{n-k-1}*\mu)(t) , \quad
M_{0}(t) \, := \, \mu(t) .
\ee
For the time points $s = Tj-\epsilon$ at $\epsilon \to 0+$, and 
the variables 
\begin{equation} \label{EQ-Th12-XPk}
X_{j}=\lim_{\epsilon \rightarrow 0+} X(Tj-\epsilon), \quad
P^{*}_{k,j}=\lim_{\epsilon \rightarrow 0+} P^{*}_k(Tj-\epsilon),
\quad (j=1, \dots ,m+1) ,
\end{equation} 
solutions \eqref{EQ-Th12-2a}, \eqref{EQ-Th12-2b} are represented by the general nonlocal mappings
\[
X_{m+1} = X_m \,+ \, 
\sum^{n-1}_{j=0} X^{(j)}(0) \, \Omega_{(h_{j+1})}(T,m) \, + \,
\]
\be \label{EQ-Th12-3a}
\lambda \, T \, M_{n}(T) \, {\cal G}(T(m+1),X_m) +
\lambda \, T \, \sum^{n-1}_{j=1} 
\Omega_{(M_{n-k})}(T, m-j) \, {\cal G}(jT,X_j) ,
\ee
\[
P^{*}_{k,m+1} = P^{*}_{k,m} \,+ \,
\sum^{n-k-1}_{j=0} P^{*,(j)}_k (0) \, 
\Omega_{(h_{j+1})}(T,m) \, + \,
\lambda \, T \, M_{n-k}(T) \, {\cal G}(Tm,X_m) \, +
\]
\be \label{EQ-Th12-3b}
\lambda \, T \, \sum^{m-1}_{j=1} 
\Omega_{(M_{n-k})}(T, m-j) \, {\cal G}(jT,X_j) ,
\ee
where $M(t) \in C_{-1}(0,\infty)$ is a function that is associated kernel to the kernel $K(t)$. 
\end{Theorem}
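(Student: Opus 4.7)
The plan is to adapt the argument of Theorem~\ref{Theorem-11} by replacing the second fundamental theorem for the Riemann--Liouville-type GFD by its Caputo-type counterpart, equation \eqref{2FTCn} of Theorem~\ref{Theorem-8}. This substitution will automatically generate the polynomial initial-data corrections $\sum_{j=0}^{n-1} X^{(j)}(0)\, h_{j+1}(t)$ and $\sum_{j=0}^{n-k-1} (P^*_k)^{(j)}(0)\, h_{j+1}(t)$ that distinguish the present Caputo case from the Riemann--Liouville case treated in Theorem~\ref{Theorem-11}.

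First I would apply the GFI $I^{s,n}_{(M)}[t]$ to \eqref{EQ-Th12-1a} and invoke \eqref{2FTCn} on the left-hand side, producing $X(s) - \sum_{j=0}^{n-1} X^{(j)}(0)\, h_{j+1}(s)$ equated to the convolution of $M(t) = M_{n}(t)$ against the kicked forcing. For the auxiliary equation \eqref{EQ-Th12-1b} I first have to certify that the pair $(M_{n-k}(t), K(t))$ belongs to the Luchko set $\mathcal{L}_{n-k,0}$, which follows from associativity of the Laplace convolution together with the Sonin condition $(\mu \, * \, \nu)(t) = \{1\}$:
\[
(M_{n-k} \, * \, K)(t) \, = \, (h_{n-k-1} \, * \, \mu \, * \, \nu)(t) \, = \, (h_{n-k-1} \, * \, \{1\})(t) \, = \, h_{n-k}(t).
\]
This legitimates applying $I^{s,n-k}_{(M_{n-k})}[t]$ to \eqref{EQ-Th12-1b} and using the Caputo-type second fundamental theorem at order $n-k$ on its left-hand side.

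Next, following the template of Theorems~\ref{Theorem-1}, \ref{Theorem-3}, \ref{Theorem-5}, \ref{Theorem-10}, and \ref{Theorem-11}, I would truncate the infinite sum of kicks to the finitely many with $Tj - \epsilon < s$ (i.e.\ $j \le m$ when $Tm < s < T(m+1)$) and collapse the convolution integrals by the delta-function sifting identity \eqref{Proof-delta}. This delivers the continuous solutions \eqref{EQ-Th12-2a} and \eqref{EQ-Th12-2b}. To reach the discrete maps \eqref{EQ-Th12-3a} and \eqref{EQ-Th12-3b}, I would then evaluate these continuous solutions at $s = T(m+1) - \epsilon$ and $s = Tm - \epsilon$, pass to the limit $\epsilon \to 0+$ via \eqref{EQ-Th12-XPk}, and subtract: the last surviving kick ($j = m$) produces the isolated term $\lambda \, T \, M_{n}(T)\, {\cal G}(T(m+1), X_m)$ (and its analogue with $M_{n-k}$), the remaining kernel differences telescope into $\Omega_{(M_{n-k})}(T, m-j)$ as defined in \eqref{Omega-10}, and the initial-data polynomial assembles into $\sum_j X^{(j)}(0)\, \Omega_{(h_{j+1})}(T, m)$ via the same identity applied to $h_{j+1}$.

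The main obstacle I foresee is the bookkeeping of the initial-data polynomials $\sum_{j=0}^{n-1} X^{(j)}(0)\, h_{j+1}(s)$ and $\sum_{j=0}^{n-k-1} (P^*_k)^{(j)}(0)\, h_{j+1}(s)$ through the subtraction step, since the analogous Riemann--Liouville argument in Theorem~\ref{Theorem-11} lacks the former set of terms. One must check that $h_{j+1}(T(m+1)) - h_{j+1}(Tm) = \Omega_{(h_{j+1})}(T, m)$ in exact agreement with the definition in \eqref{Omega-10}, so that these polynomial contributions persist through every step of the telescoping without being absorbed into or colliding with the kernel differences. A secondary subtlety is the degenerate case $k = n-1$ in the kernel formula $M_{1} = (h_{0} \, * \, \mu)$, which must be read via the convolution identity $h_{0} \, * \, f = f$ so that $M_{1} = \mu = M_{0}$ consistently with \eqref{Mn-k}.
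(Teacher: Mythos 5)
Your proposal is correct and follows essentially the same route as the paper: apply the order-$(n-k)$ GFI with kernel $M_{n-k}=(h_{n-k-1}*\mu)$ to the unified family of equations, invoke the Caputo-type second fundamental theorem to produce the initial-data polynomial $\sum_{j} P^{*,(j)}_k(0)\,h_{j+1}(s)$, truncate the kicks, apply the delta sifting identity, and subtract the solutions at $s=T(m+1)-\epsilon$ and $s=Tm-\epsilon$ in the limit $\epsilon\to 0+$. Your explicit verification that $(M_{n-k}\,*\,K)(t)=h_{n-k}(t)$, which licenses the use of the fundamental theorem at order $n-k$, is a step the paper leaves implicit, but it does not change the argument.
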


\begin{proof} 
In the proof, we use the second fundamental theorem for GFC of arbitrary order that is proved in \cite{Luchko2021-2}.
This theorem states that if $(M(t),\ K(t))$ is a kernel pair from the Luchko set $\mathcal{L}_{n,0}$ and $X(t) \in C_{-1}^n(0,\infty)$, then
\begin{equation} \label{2FTCn-b}
I^{s,n}_{(M)}[t] \, D^{t,*,n}_{(K)}[\tau] \, X(\tau) = 
X(s) - \sum^{n-1}_{j=0} X^{(j)}(0)\, h_{j+1}(s) .
\end{equation}
Let us consider equations \eqref{EQ-Th12-1a} and \eqref{EQ-Th12-1b} in the form
\be \label{Proof-Th12-0}
D^{t,*,n-k}_{(K)} [\tau] \, P^{*}_k(\tau) = \lambda \, {\cal G}(t,X(t)) 
\sum^{\infty}_{j=1} \delta ((t+\epsilon)/T-j) ,
\ee
where $k=1, \dots , n-1$ and $P^{*}_k(\tau) = X(t)$ for $k=0$.
The action of the integral operator $I^{s,n-k}_{(M)}[t]$ on equation \eqref{Proof-Th12-0} gives
\be \label{Proof-Th12-1}
I^{s,n-k}_{(M)}[t] \, D^{t,*,n-k}_{(K)} [\tau] \, P^{*}_k(\tau) = 
\lambda \, I^{s,n-k}_{(M)}[t] \, {\cal G}(t,X(t)) 
\sum^{\infty}_{j=1} \delta ((t+\epsilon)/T-j) .
\ee
Using the second fundamental theorems \eqref{2FTCn-b} for $D^{t,*,n}_{(K)}$, we get
\be \label{Proof-Th12-2}
P^{*}_k(s) - \sum^{n-k-1}_{j=0} P^{*,(j)}_k(0)\, h_{j+1}(s) 
= \lambda \, I^{s,n-k}_{(M)}[t] \, {\cal G}(t,X(t)) 
\sum^{\infty}_{j=1} \delta ((t+\epsilon)/T-j) .
\ee
Using definition of $I^{s,n-k}_{(M)}$, we write equation \eqref{Proof-Th12-2} as
\be \label{Proof-Th12-3}
P^{*}_k(s) - \sum^{n-k-1}_{j=0} P^{*,(j)}_k(0)\, h_{j+1}(s) 
= \lambda \, 
\int^s_0 dt \, M_{n-k}(s-t) \, {\cal G}(t,X(t)) 
\sum^{\infty}_{j=1} \delta ((t+\epsilon)/T-j) ,
\ee
where $M_{n-k}(t)$ is defined by \eqref{Mn-k}. 
For $Tm<s<T(m+1)$, equation \eqref{Proof-Th12-3} is 
\be \label{Proof-Th12-4}
P^{*}_k(s) - \sum^{n-k-1}_{j=0} P^{*,(j)}_k(0)\, h_{j+1}(s) 
= \lambda \, \sum^{m}_{j=1}
\int^s_0 dt \, M_{n-k}(s-t) \, {\cal G}(t,X(t)) 
\delta ((t+\epsilon)/T-j) ,
\ee
Using equality \eqref{Proof-delta}, equation \eqref{Proof-Th12-4} gives
\be \label{Proof-Th12-5}
P^{*}_k(s) - \sum^{n-k-1}_{j=0} P^{*,(j)}_k(0)\, h_{j+1}(s) 
= \lambda \, T \, \sum^{m}_{j=1}
M_{n-k}(s-Tj+\epsilon) \, {\cal G}(Tj-\epsilon,X(Tj-\epsilon)) ,
\ee
Equation \eqref{Proof-Th12-5} gives solutions \eqref{EQ-Th12-2a} and \eqref{EQ-Th12-2b}.

Using \eqref{Proof-Th12-5} for $s= T(m+1)-\epsilon$ and
$s= Tm-\epsilon$, we obtain
\[
P^{*}_k( T(m+1)-\epsilon) - \sum^{n-k-1}_{j=0} P^{*,(j)}_k(0)\, h_{j+1}( T(m+1)-\epsilon) \, = \, 
\]
\be \label{Proof-Th12-6}
\lambda \, T \, \sum^{m}_{j=1}
M_{n-k}( T(m+1)-Tj) \, {\cal G}(Tj-\epsilon,X(Tj-\epsilon)) ,
\ee
\[
P^{*}_k( Tm-\epsilon) - \sum^{n-k-1}_{j=0} P^{*,(j)}_k(0)\, h_{j+1}( Tm-\epsilon) \, = \,
\]
\be \label{Proof-Th12-7}
\lambda \, T \, \sum^{m-1}_{j=1}
M_{n-k}( Tm-Tj) \, {\cal G}(Tj-\epsilon,X(Tj-\epsilon)) .
\ee
Using variables \eqref{EQ-Th12-XPk}, solutions \eqref{Proof-Th12-6} and \eqref{Proof-Th12-7} at the limit $\epsilon \to 0+$ give
\be \label{Proof-Th12-8}
P^{*}_{k,m+1} \, - \,
\sum^{n-k-1}_{j=0} P^{*,(j)}_k(0)\, h_{j+1}( T(m+1)) 
= \lambda \, T \, \sum^{m}_{j=1}
M_{n-k}( T(m-j+1)) \, {\cal G}(Tj,X_j)) ,
\ee
\be \label{Proof-Th12-9}
P^{*}_{k,m} \, - \,
\sum^{n-k-1}_{j=0} P^{*,(j)}_k(0)\, h_{j+1}(Tm) 
= \lambda \, T \, \sum^{m-1}_{j=1}
M_{n-k}( T(m-j)) \, {\cal G}(Tj,X_j) .
\ee
Subtracting equation \eqref{Proof-Th12-9} from equation \eqref{Proof-Th12-8}, we obtain 
\[
P^{*}_{k,m+1} \, - \, P^{*}_{k,m} \, = \,
\sum^{n-k-1}_{j=0} P^{*,(j)}_k(0)\, 
\Bigl( h_{j+1}(T(m+1)) \, - \, h_{j+1}(Tm) \Bigr) \, + \,
\]
\[
\lambda \, T \, M_{n-k}(T) \, {\cal G}(Tm,X_m)) \, + \,
\]
\be \label{Proof-Th12-10}
\lambda \, T \, \sum^{m-1}_{j=1}
\Bigl( M_{n-k}( T(m-j+1)) \, - \, M_{n-k}( T(m-j)) \Bigr) \, 
{\cal G}(Tj,X_j)) ,
\ee
where $k=0,1,\dots , n-1$ with $P_0(t)=X(t)$ .
Then using function \eqref{Omega-10}, equations \eqref{Proof-Th12-10} takes form \eqref{EQ-Th12-3a} and \eqref{EQ-Th12-3b}.

\end{proof}


\section{Conclusion}

In this work, we derive general fractional dynamics with discrete time from general fractional dynamics with continuous time.
Starting from equations with general fractional integral and derivatives with Sonin kernels, we derive exact solutions of these equations. Then, using these solutions for discrete time points, we obtain general universal maps with non-locality in time without approximations.
The universality of the mapping is due to the use of an arbitrary nonlinear function $N={\cal G}(t,X(t))$, and the generality is due to
the general operators kernels $M(t)$ and $K(t)$ of GFC of 
from the Luchko set ${\cal L}_{n,0}$.

We assume that the proposed general nonlocal mappings can be studied by methods and equations proposed in work \cite{Edelman2021}. This possibility is due to the following. 
Note that the proposed general nonlocal maps, which are derived from general fractional differential and integral equations, can be represented by equation \eqref{Proof-Th2-5}, \eqref{Proof-Th3-2c}, \eqref{Proof-Th6-5}, which have the form 
\be \label{Proof-Th2-5b}
X_n = X(0) \, + \, \lambda \, T \, \sum^{n-1}_{k=1} 
M(T(n-k)) \, {\cal G}(Tk,X_k) .
\ee
\be \label{Proof-Th3-5b}
X_n = \, \lambda \, T \, \sum^{n-1}_{k=1} 
M(T(n-k)) \, {\cal G}(Tk,X_k) .
\ee 
\be \label{Proof-Th6-5b}
X_{n} \, = \, \lambda \, T \, \sum^{n-1}_{k=1} 
K^{(1)}(T(n-k)) \, {\cal G}(Tk,X_k) .
\ee
We see that all proposed general universal maps \eqref{Proof-Th2-5b}, \eqref{Proof-Th3-5b}, \eqref{Proof-Th6-5b} with non-locality in time and ${\cal G}(t,X)={\cal G}(X)$, which are derived from equations with GFD and GFI, can be represented as
\be \label{Universal}
X_n = X_0 \, - \, \sum^{n-1}_{k=1} U(n-k) \, G^0(X_k)
\ee
with the function 
\be
G^0(X_k) \, = \, - \, \lambda \, T \, {\cal G}(X_k) ,
\ee
and the kernel
\be
U(n-k) \, = \, M(T(n-k)) ,
\ee
\be
U(n-k) \, = \, K^{(1)}(T(n-k)) ,
\ee
where $M(t)$ and $K(t)$ are the Sonin's kernels of general fractional integral $I^{t}_{(M)}$, and general fractional derivatives $D^{t}_{(K)}$, $D^{t,*}_{(K)}$.

We see that general nonlocal mappings \eqref{Proof-Th2-5b}, \eqref{Proof-Th3-5b}, \eqref{Proof-Th6-5b} are described by a discrete convolution. 
Equations of type \eqref{Universal} is important to study the chaotic and regular behavior of fractional systems that are nonlocal in time. 

Equation \eqref{Universal} coincides with Equation 6 in \cite{Edelman2021}, which is starting point for the study of fractional mappings with non-locality in time, in the framework of the approach proposed by Mark Edelman in the work \cite{Edelman2021}. 

The general nonlocal maps of arbitrary orders, which are derived from equations with GFI and GFDs of arbitrary orders and described by Theorems \eqref{Theorem-10}, 
\eqref{Theorem-11}, \eqref{Theorem-12}, 
can also be represented by equations with discrere convolution.
These maps can be given by equations \eqref{Proof-Th10-7b}, and
\eqref{Proof-Th11-9a}, \eqref{Proof-Th11-9b} and \eqref{Proof-Th12-9} in the form
\be \label{Proof-Th10-7b-Conclusion}
V_{k,m} = \lambda \, T \, \sum^{m-1}_{j=1}
K^{(n-k)}(T(m-j)) \, {\cal G}(Tj,X_j) ,
\ee
where $k=0,1,\dots, n-1$ and $V_{0,m}=X_m$.
\be \label{Proof-Th11-9a-Conclusion}
X_m \, = \, \lambda \, T \, \sum^{m-1}_{j=1} 
M_{n}(T(m-j)) \, {\cal G}(Tj,X_j) ,
\ee
\be \label{Proof-Th11-9b-Conclusion}
P_{k,m} \, - \, 
\sum^{n-k-1}_{j=0} P^{(j)}_k(0) \, h_{j+1} (Tm) \, = \
\lambda \, T \, \sum^{m-1}_{j=1} 
h_{n-k}(T(m-j)) \, {\cal G}(Tj,X_j) .
\ee
where $k=1,\dots, n-1$.
\be \label{Proof-Th12-9-Conclusion}
P^{*}_{k,m} \, - \,
\sum^{n-k-1}_{j=0} P^{*,(j)}_k(0)\, h_{j+1}(Tm) 
= \lambda \, T \, \sum^{m-1}_{j=1}
M_{n-k}( T(m-j)) \, {\cal G}(Tj,X_j) .
\ee
where $k=0,1,\dots, n-1$ and $P^{*}_{0,m}=X_m$.

The chaotic and regular behavior of systems, which are described by such general nonlocal mappings of arbitrary orders, 
can be investigated by generalizing the Edelman method \cite{Edelman2021} to nonlocal mappings of arbitrary order. 

Behavior of fractional dynamical systems with non-locality in time can be very different from the behavior of the dynamical systems with locality in time. 
To study and describe the chaotic and regular behavior of dynamical systems, it is important to know periodic points. 
Fractional dynamical systems have only fixed points, but
these systems can have asymptotically periodic points (sinks) \cite{Edelman2021}. 
For the first time, a method and equations, which allow one to find asymptotically periodic points for nonlinear fractional systems with non-locality in time, were proposed in work \cite{Edelman2021}. 
In article \cite{Edelman2021}, the equations, which can be used to calculate coordinates of the asymptotically periodic sinks, are derived.





\end{document}